	\newcommand{\nbc}[3]{
		{\colorbox{#3}{\bfseries\sffamily\scriptsize\textcolor{white}{#1}}}
		{\textcolor{#3}{\sf\small$\langle$\textit{#2}$\rangle$}}}
	\newcommand{\nbc}[3]{}
	\newcommand{\ncc}[1]{{\color{blue}#1}}
  	\newcommand{\ncolor}{\color{blue}}
	\newcommand{\ncc}[1]{#1}
  	\newcommand{\ncolor}{}
\newcommand\newcontent[1]{\ncc{#1}}
\newcommand{\RQContent}[1]{%
  \ifcase#1
    null
  \or Does PSALM empirically outperform RS in terms of fault detection effectiveness?
  \or How does the effectiveness differ between applying PSALM to source test case selection and to MG selection?
  \or Besides RS, how does PSALM compare with other representative selection strategies (Adaptive Random Testing (ART) and MT-ART) used in MT?
  \else null%
  \fi
}
\newcommand{\RQIndex}[1]{RQ\the\numexpr 0 + #1 \relax}
\newcommand{\AnswertoRQ}[2]{
  \begin{tcolorbox}[colback=black!1!white,colframe=black!40!white,
    left=0.88mm, right=0.88mm, top=0.88mm, bottom=0.88mm, boxsep=0.66mm, arc=2.5mm]
    \textbf{Answer to #1:} #2
  \end{tcolorbox}
}
\theoremstyle{definition}
\newtheorem{definition}{Definition}
\theoremstyle{plain}
\newtheorem{proposition}{Proposition}
\newtheorem{lemma}{Lemma}
\theoremstyle{remark}
\setlist[itemize]{topsep=0pt, partopsep=0pt, parsep=0pt, itemsep=0pt}
\begin{document}
\title{PSALM: applying Proportional SAmpLing strategy in Metamorphic testing}

\author{
Zenghui Zhou,
Pak-Lok Poon,~\IEEEmembership{Member,~IEEE,}
Zheng Zheng*\thanks{* Corresponding author.},~\IEEEmembership{Senior Member,~IEEE,}
Xiao-Yi Zhang
        
\thanks{Z.~Zhou and Z.~Zheng are with the School of Automation Science and Electrical Engineering, Beihang University, Beijing 100191, China.~%
E-mail: \{zhouzenghui,\,zhengz\}@buaa.edu.cn.}%
\thanks{P.-L.~Poon is with the School of Engineering and Technology, Central Queensland University, Melbourne VIC 3000, Australia.~%
E-mail: p.poon@cqu.edu.au.}
\thanks{X.-Y. Zhang is with the School of Computer and Communication Engineering, University of Science and Technology Beijing, Beijing 100191, China.~%
E-mail: xiaoyi@ustb.edu.cn}

}%

\maketitle
\begin{abstract}
Metamorphic testing (MT) alleviates the oracle problem by checking metamorphic relations (MRs) across multiple test executions.
The fault detection effectiveness of MT is influenced not only by the choice and quality of MRs, but also by how source test cases and metamorphic groups (MGs) are selected.
While substantial research has focused on designing, generating, and validating MRs, systematic methods for source test case selection and MG selection remain largely unexplored.
Although the Proportional Sampling Strategy (PSS) provides strong theoretical guarantees in traditional testing, its assumptions cannot be directly applied in MT due to differences in selection domains, test units, and failure distributions.
This paper proposes PSALM, an adaptation of PSS to MT for both source test case selection and MG selection.
We formally prove that PSALM is never inferior to random selection regardless of how the source test case and MG domains are partitioned.
We further identify the conditions under which applying PSALM to source test case selection and MG selection yields identical effectiveness.
A comprehensive empirical study on eight subject programs and 184 mutants shows that the results are consistent with our theoretical analysis and that PSALM generally performs more effectively than existing selection strategies such as ART and MT-ART.
These results demonstrate that PSALM provides a theoretically grounded and practically effective selection strategy for MT.
\end{abstract}

\begin{IEEEkeywords}
  Metamorphic testing, metamorphic relation, software testing, partition testing, proportional sampling strategy.
\end{IEEEkeywords}

\section{Introduction}
\label{sec:intro}
\IEEEPARstart{M}{etamorphic} Testing (MT) is an effective technique to alleviate the \emph{oracle problem} in software testing~\cite{ChenCheung1998, ChanChen1998,Chen2015}, where verifying the correctness of individual system outputs is infeasible or impractical due to: 
(a)~the absence of a mechanism to compute the ``expected'' (i.e., correct) system output of a given input~\cite{BarrHarman2015, SeguraFraser2016}, or
(b)~the presence of resource constraints, thereby prohibiting the human tester from computing the ``expected'' output using the mechanism in~(a).
Unlike ``traditional''  testing techniques (i.e., non-MT) that require knowledge about the expected output of \emph{each individual} input, MT attempts to verify the relationship among \emph{multiple} inputs and their corresponding outputs for fault detection~\cite{ZhouXiang2016, ZhangLiu2024}. The effectiveness of MT in fault detection (in the rest of this paper, ``effectiveness of MT" always refers to its fault detection effectiveness.)
has been demonstrated in various application domains, including embedded systems~\cite{KuoChen2011}, 
compilers~\cite{XiaoLiu2022}, 
machine learning classifiers~\cite{XieHo2011}, 
web services~\cite{ChaleshtariPastore2023,SunWang2011}, 
autonomous car systems~\cite{DengZheng2023, EniserGros2022}, 
large language models~\cite{LiLi2024, HyunGuo2024},
and quantum computing platforms~\cite{PaltenghiPradel2023}.


In general, MT involves the following five major steps~\cite{ChenCheung1998, Chen2015, SeguraFraser2016}:
\begin{itemize}
\item[(1)] \textbf{Metamorphic relation (MR) identification}: Define MRs based on the properties of the software under test (SUT).
\newcontent{\item[(2)] \textbf{Source test case selection}: Select or generate a set of source test cases (for the rest of this paper, the terms ``test case'' and ``(test) input'' are used interchangeably) where the size of the set depends on available testing resources. If more testing resources are available, the size could be larger.}
\newcontent{\item[(3)] \textbf{Metamorphic group (MG) construction}: For each MR, generate follow-up test cases from the selected source test cases. One or more source test cases together with their corresponding follow-up test cases constitute an \emph{MG}.
\item[(4)] \textbf{MG selection}: If the available testing resources do not allow all MGs to be tested, then select a representative subset of MGs for testing.}
\item[(5)] \textbf{MR verification}: For each MG selected in Step~4, check whether the execution results with source and follow-up test cases violate the corresponding MR\@.  If a violation is observed, a fault is detected.
\end{itemize}

Obviously, how these steps are performed directly impacts the coverage and, hence, the effectiveness of MT.\@ 
For example, if some MRs (Step~1), ``representative" source test cases (Step~2), or ``representative" MGs (Step~4) are not selected or generated, the comprehensiveness of MT will be adversely affected. 
A close examination of the literature, however, reveals that most existing studies have concentrated on two aspects: 
(a)~the systematic generation of MRs (Step~1)
\cite{SeguraFraser2016, SunFu2021, QiuZheng2022, AyerdiTerragni2024, NolascoMolina2024, ShinPastore2024}, and 
(b)~the applicability and empirical effectiveness of MT in different domains (Step~5)~%
\cite{SeguraFraser2016, ZhouXiang2016, XiaoLiu2022, XieZhang2020, MandrioliShin2025}. 
\newcontent{By contrast, systematic methods for Step~2 (source test case selection) and Step~4 (MG selection) have received less attention, 
even though these two steps are critical to ensuring that MT achieves high fault-detection effectiveness under limited testing resources~\cite{BarusChen2016, SahaKanewala2018, HuiWang2021}.}

\newcontent{There are three key differences between MT and traditional testing:
\begin{itemize}
  \item \textbf{Selection domain.} The selection domain in MT is no longer the original input domain as in the case of traditional testing. MT involves two selection domains: the MR-constrained source test case domain (denoted by $D^{st}$; see Step~2 above) and the derived MG domain (denoted by $D^\mathrm{MG}$; see Step~4 above). In MT, this difference undermines the applicability of those selection methods that directly work on the original input domain~\cite{ChenCheung1998, Chen2015}.

  \item \textbf{Test unit.} In MT, fault detection is performed at the MG (which is a combination of relevant source and follow-up test cases) level. On the other hand, detecting faults is based on individual test cases in traditional testing. In other words, in MT, the \textit{test unit} (defined as the minimal entity used to determine the possible existence of program faults)
  is an MG rather than a single test case as in traditional testing. Accordingly, the focus of MT is failure-causing MGs, whereas the focus of traditional testing is failure-causing test cases.

  \item \textbf{Distribution of failure-causing test units.} The distribution of failure-causing MGs in MT is different from the distribution of failure-causing test cases in traditional testing.

\end{itemize}}

\noindent
\newcontent{Due to these fundamental differences, the theoretical assumptions underlying traditional testing techniques (e.g., the proportional sampling strategy (PSS)~\cite{ChanChen1996, ChenTse2001}) may not hold in the MT setting. As a result, their theoretical conclusions cannot be directly relied upon in MT.}

\newcontent{Motivated by the above observations, we propose PSALM 
(``Proportional SAmpLing strategy in Metamorphic testing''), 
a method that extends PSS to the MT context. 
In  essence, PSALM systematically guides both source test case selection and MG selection by adapting a proportional allocation principle across MR-induced partitions of the source test case domain and the MG domain, aiming to improve the effectiveness of MT\@.}


\newcontent{
The main contributions of our work are:
\begin{itemize}
  \item We propose PSALM\,---\, the first method that adapts PSS for source test case selection and MG selection in MT\@. 
  \item We provide formal definitions and proofs that establish the contribution of PSALM to the effectiveness of MT, showing under what conditions PSALM is not inferior to random selection of source test cases and MGs, in terms of fault detection.
  \item We conduct a comprehensive empirical evaluation involving various baseline programs and real-world projects for benchmarking. The empirical results have demonstrated that PSALM outperforms the subject baseline programs in fault detection effectiveness, especially under tight resource constraints.
\end{itemize}
}


\vspace{0.5ex} 
The rest of the paper is structured as follows.
Section~\ref{motivation} first discusses some issues related to selecting source test cases and MGs in MT, serving as the motivation for developing PSALM\@.
Section~\ref{sec:preliminaries} introduces the preliminaries for understanding PSALM\@.
Section~\ref{sec:PSALM} discusses the major steps of applying PSALM for source test case selection and MG selection.
Section~\ref{sec:pssVSrs} then provides our theoretical analysis to show that PSALM is superior to random selection of source test cases and MGs, in terms of fault detection effectiveness.
This is followed by Section~\ref{sec:phase1VSphase2}, which theoretically analyzes the relative performance of applying PSALM in source test case selection and MG selection.
Section~\ref{sec:experimentSetup} discusses our research questions and experimental setup, aiming to provide further empirical support to the effectiveness of PSALM\@.
Section~\ref{sec:results} discusses and analyzes our experimental results.
Section~\ref{sec:threatsValidity} outlines the threats to the validity of our experiments.
Section~\ref{sec:relatedWork} discusses some related work.
Finally, Section~\ref{sec:conclusion} concludes this paper and discusses some potential future work.


\section{Motivation}
\label{motivation}

\newcontent{
We revisit Steps~2 and~4 of MT (see Section~\ref{sec:intro}), which involve selecting source test cases and MGs, respectively. 
Note that these two steps correspond to distinct types of selection problem, because they involve different selection domains ($D^{st}$ in Step~2 and $D^\mathrm{MG}$ in Step~4) and different selection units (source test cases in Step~2 and MGs in Step~4).

We use a simple example to illustrate the two issues associated with Steps~2 and 4\@.
Consider the $\mathit{sin}(\,)$ program which implements the trigonometric sine function. The input to $\mathit{sin}(\,)$ is an angle $x$ in any degree. Testing $\mathit{sin}(\,)$ involves the oracle problem because, given any $x$, the precise value of its corresponding output (i.e., $\mathit{sin}(x)$) is not known unless $x$ is a special degree (e.g., $x = 90^\circ$ and $\mathit{sin}(x) = 1$). 
There are many properties associated with the sine function; each such property can be defined as an MR\@. An example is MR$_a$ below:

\begin{quote}
Given any input $x$, if we generate another input $y$
such that $y = x + k(360^{\circ})$ (where $k$ is any integer), then $\mathit{sin}(x) = \mathit{sin}(y)$. Here, $x$ is a source test case, $\mathit{sin}(x)$ is a source output, $y$ is a follow-up test case, and $\mathit{sin}(y)$ is a follow-up output.
\end{quote}

Under tight resource constraints, when testing $\mathit{sin}(x)$ with MT, Steps~2 and~4 incur the following issues:
}

\vspace{0.5ex} 
\begin{itemize}
\item 
\newcontent{\textbf{Explosion of the number of source test cases}: For MR$_a$ mentioned above, the number of source test cases is unbounded, because $x$ can be any angle. Thus, the size of $D^{st}$ is infinite.}

\item
\newcontent{\textbf{Explosion of the number of MGs}: Given any source test case ($x$), with respect to MR$_a$, the number of corresponding follow-up test cases ($y$) is infinite, because $k$ can be any integer. For example, given $x = 10^\circ$, $y$ can be $370^\circ$ (if $k = 1$),
$730^\circ$ (if $k = 2$),
$1\,090^\circ$ (if $k = 3$), \ldots.
Note that not only the number of follow-up test cases of a source test case (e.g., $x = 10^\circ$) is infinite, but the number of MGs associated with MR$_a$ (and, hence, the size of $D^\mathrm{MG}$) is also infinite, because an MG is a combination of some source test cases and all their corresponding follow-up test cases.
}
\end{itemize}
\vspace{0.5ex} 

\newcontent{
In view of the above two issues, under resource constraints, with respect to MR$_a$, we have to select a subset of $D^{st}$ and a subset of $D^\mathrm{MG}$ for testing, and this selection is typically performed \textit{randomly} in current MT practice~\cite{BarusChen2016, SunLiu2022}. 
Adopting random selection is mainly because this approach is straightforward to implement and it requires no execution feedback information. On the downside, however, random selection provides no guarantee that the selected test cases or MGs are ``representatives'' of all other elements in $D^{st}$ and $D^\mathrm{MG}$, respectively, thereby potentially reducing the effectiveness of MT~\cite{Gutjahr1999}. 

The above issues have inspired us to develop PSALM, by extending and adapting PSS (a selection approach developed for traditional testing) to the MT context. 
In this paper, we have provided theoretical proofs and conducted experiments that demonstrate that PSALM outperforms the random approach in selecting source test cases and MGs in MT, in terms of fault detection.
}

\section{Preliminaries}
\label{sec:preliminaries}

To facilitate the discussion of PSALM, we first outline the main concepts of MT and PSS\@.

\subsection{Metamorphic Testing}
\label{sec:mt}

MT is a property-based technique developed to address the oracle problem in software testing. 
It checks the relationships between the inputs and outputs of the SUT across multiple executions against the relevant MR, rather than validating the correctness of individual outputs; thus, the oracle problem is alleviated. 
Although MT has been widely used to address the oracle problem in testing, it is also applicable to testing scenarios where the oracle problem does not exist~\cite{ChenCheung1998, SeguraFraser2016, ChenKuo2018}. 
Until now, MT has been effectively applied to a wide range of application domains for fault detection. 

A key concept in MT is MRs, which are the ``expected'' relationships between inputs and outputs across multiple executions of the SUT\@. 
\newcontent{
Given an MR, a set of source test cases ($st_i$s, $i \geqslant 1$) is firstly constructed. Then, for each $st_i$, a corresponding set of follow-up test cases ($ft_{i,j}$s, $i, j \geqslant 1$) is generated. 
Each MG is a set of some $st_i$s associated with their relevant MR, together with all their $ft_{i,j}$s. Thus, each MR is associated with many or even an infinite number of MGs.}

If executing the source and follow-up test cases in an MG produces outputs that violate the MR associated with this MG, this MG is said to trigger a defect (or reveal) a fault in the SUT\@. 
Such an MG is called a \textit{defect-triggering MG}, and its associated MR is called a \textit{defect-triggering MR\@}.

\newcontent{We illustrate the above concept with the $\mathit{sin}(\,)$ program mentioned in Section~\ref{sec:intro}. 
Suppose we test $\mathit{sin}(\,)$ and due to some constraints, the input domain is restricted to $[-360^\circ, 360^\circ]$. 
Consider MR$_b$: For any angles $x$ and $y$ (in degree) such that $x, y \in [-360^\circ, 360^\circ]$ and $(x + y) \in [-360^\circ, 360^\circ]$, we have:
\[
\mathit{sin}(x + y) = \mathit{sin}(x)\mathit{sin}\!\left(90^\circ - y\right)
+ \mathit{sin}(y)\mathit{sin}\!\left(90^\circ - x\right).
\]

Testing MR$_b$ involves the following five executions of $\mathit{sin}(\,)$:

\vspace{0.5ex} 
\begin{itemize}
\item Execute $\mathit{sin}(\,)$ with the source input $x$ to get the source output $\mathit{sin}(x)$.\,%
\item Execute $\mathit{sin}(\,)$ with the source input $y$ to get the source output $\mathit{sin}(y)$.
\item Execute $\mathit{sin}(\,)$ with the follow-up input $(x + y)$ to get the follow-up output $\mathit{sin}(x + y)$.
\item Execute $\mathit{sin}(\,)$ with the follow-up input $(90^\circ - x)$ to get the follow-up output $\mathit{sin}(90^\circ - x)$.
\item Execute $\mathit{sin}(\,)$ with the follow-up input $(90^\circ - y)$ to get the follow-up output $\mathit{sin}(90^\circ - y)$.
\end{itemize}

\vspace{0.5ex} 
Here, we can form an MG$_b$ (corresponding to MR$_b$) containing the above five source and follow-up inputs: $x$, $y$, $(x + y)$, $(90^\circ - x)$, and $(90^\circ - y)$.
With the above five source and follow-up outputs, we then check whether or not MR$_b$ is violated.
If yes, then MG$_b$ is a defect-triggering MG, and MR$_b$ is a defect-triggering MR\@.
}

\newcontent{Several points should be noted from the above example involving testing $\mathit{sin}(\,)$ with MT:

\begin{itemize}
\item For the first two executions above, the source inputs $x$ and $y$ can be randomly selected from the input domain. However, when testing some other programs, we may need to generate source test cases using a systematic method such as the choice relation framework~\cite{ChenPoon2003, PoonTang2010} or the classification-tree methodology~\cite{ChenPoon2000}.

\item Given $x$ and $y$, their corresponding follow-up inputs $(x + y)$, $(90^\circ - x)$, and $(90^\circ - y)$ can be systematically derived according to MR$_b$. 

\item Checking whether or not MR$_b$ holds does not require the knowledge of the correct \textit{individual} outputs of $\mathit{sin}(x)$, $\mathit{sin}(y)$, $\mathit{sin}(x+y)$, $\mathit{sin}\!\left(90^\circ - x\right)$, and $\mathit{sin}\!\left(90^\circ - y\right)$.

\item The tuple $\langle x, y, x + y, 90^\circ - x, 90^\circ - y \rangle$ constitutes MG$_b$ with respect to MR$_b$. 
Since $x$ and $y$ can take any values from the input domain $[-360^\circ, 360^\circ]$ provided that the follow-up test cases derived from them, such as $(x+y)$, also remain within this domain, MR$_b$ is associated with an infinite number of MGs.

\end{itemize}
}

\vspace{0.5ex} 
\newcontent{
To illustrate the diversity of metamorphic relations, consider MR$_c$:
\[
\mathit{sin}(-x) = -\mathit{sin}(x),
\]
which captures the odd symmetry of the sine function. 
With respect to MR$_c$, given a source input $x$, the corresponding follow-up input is $-x$, and the resulting MG is $\langle x, -x \rangle$. 
MR$_c$ differs from MR$_b$ in that the former involves only one source input and only one follow-up input. This illustrates that MRs can vary in complexity and structure. 
}

\subsection{Proportional Sampling Strategy (PSS)}
\label{sec:pss}

\emph{Partition testing} is a technique that divides the input domain (denoted by $D$) into input subdomains\,%
\footnote{\,In the rest of the paper, when there is no ambiguity, we will simply refer to ``input subdomains'' as ``subdomains''.}
(also called partitions if these subdomains are disjoint), and test cases are selected from each subdomain for software execution. The partition scheme is based on some predefined criteria typically derived from the specification or the program code of SUT~\cite{GrochtmannGrimm1993,BohmePaul2016}.
The main idea behind partition testing is that the subdomains so generated are expected to be ``homogeneous'', which means that all test cases in a subdomain should cause the SUT either to succeed or to fail. In other words, any test case selected from a subdomain $D_i$ ($i \geqslant 1$) can be considered ``representative'' of all other test cases in $D_i$. 
Therefore, instead of using every test case in $D_i$ to test the SUT, only a subset of test cases is needed to effectively detect faults associated with $D_i$~\cite{ZhengXu2013,BaiLee2008}. Because of this characteristic, partition testing is often used for test case selection.

The comparison between random testing and partition testing has often been discussed until the \emph{Proportional Sampling Strategy} (\emph{PSS}) was proposed.
PSS suggests selecting the number of test cases from each $D_i$ in proportion to the size of $D_i$~\cite{LeungChen1999,LeungChen2000}.\,%
\footnote{\,In partition testing and PSS, the generated subdomains can be overlapping. However, many existing studies on PSS assume that the subdomains are disjoint. Our method PSALM, discussed in this paper also takes this assumption.}
Studies~\cite{ChanChen1996, ChenTse2001} have formally proved that the fault detection effectiveness of PSS is equal to or higher than random sampling (RS).
To perform PSS, the tester requires the information about the size ratios of the subdomains for deciding the number of test cases to be selected from each subdomain. This information is often available, rendering PSS a practical and effective software testing technique. 

\newcontent{In traditional testing, PSS has been proven to guarantee the fault detection effectiveness over RS regardless of the specific partition scheme used. 
It is this feature that contributes to the high impact of PSS in traditional testing.
This work investigates whether such a guarantee still holds in the context of MT, where the selection domains, test units, and failure distributions fundamentally differ from those in traditional testing.
}

\section{PSALM: applying Proportional SAmpLing strategy in Metamorphic testing}
\label{sec:PSALM}
\newcontent{

PSALM has a similar independence feature as PSS\@, i.e., when applying PSALM for selecting source test cases from $D^{st}$ and selecting MGs from $D^{\mathrm{MG}}$, the effectiveness of MT does not depend on how 
$D^{st}$ and $D^{\mathrm{MG}}$ are partitioned. This feature makes applying PSALM highly versatile in MT\@.
It is also the reason why the discussion on partitioning $D^{st}$ and $D^{\mathrm{MG}}$ falls beyond the scope of this paper.
}

\newcontent{The proofs of the propositions that underpin PSALM will be given in Section~\ref{sec:pssVSrs}.}
We first discuss the application of PSALM for selecting source test cases, and then move on to discuss applying PSALM for selecting MGs in this section.

\subsection{Applying PSALM for Source Test Case Selection}
\label{sec:applyPSS4stcs}


\newcontent{As explained in Section~\ref{motivation}, given an MR, we often encounter the explosion of the number of source test cases. Under resource constraints, testing all these source test cases (and their corresponding follow-up test cases) is obviously infeasible. This is the first challenge that PSALM attempts to address.

Recall that $D^{st}$ denotes the source test case domain of a given MR.
Applying PSALM for source test case selection with a view to forming $D^\mathrm{MG}$ for MR involves the following major steps:
}


\vspace{0.5ex} 
\begin{itemize}
  \item [(1)] \newcontent{Let $\mathcal{C}$ denote the intrinsic constraint of the input domain $D$, 
  and $\mathcal{C}_\mathrm{MR}$ denote the additional constraint induced by MR (i.e., the requirement 
  that all follow-up test cases generated from any $\mathit{st}$ must also fall within a certain valid input range). 
  With reference to both $\mathcal{C}$ and $\mathcal{C}_\mathrm{MR}$, derive the feasible source test case domain $D^{st} \subseteq D$.}
  
  \item [(2)] Partition $D^{st}$ into subdomains $\{D^{st}_1, D^{st}_2, \dots, D^{st}_k\}$ ($k \geqslant 1$), by considering the information obtained from the specification and the parameters of the SUT (as in traditional testing), as well as MR.


  \item[(3)] Select $\mathit{st}$s from each $D^{st}_i$ ($i \geqslant 1$) such that the number of $\mathit{st}$s selected from $D^{st}_i$ should be proportional to the size of $D^{st}_i$ relative to the size of $D^{st}$.
  
\newcontent{\item[(4)] For each selected $\mathit{st}$, with reference to MR, generate its corresponding follow-up test cases ($\mathit{ft}$s).}


\newcontent{\item[(5)] Generate a set of MGs (i.e., $D^\mathrm{MG}$) for MR. Each MG contains multiple $\mathit{st}$s selected in Step~3 and their corresponding $\mathit{ft}$s generated in Step~4.
}
\end{itemize}

\vspace{0.5ex}
\newcontent{
We illustrate how to apply the above steps using the $\mathit{sin}(\,)$ program introduced in 
Section~\ref{sec:mt}. 
For this program, the intrinsic constraint $\mathcal{C}$ requires every input of $\mathit{sin}(\,)$ to lie 
within $[-360^\circ, 360^\circ]$. 
With respect to MR$_b$, the MR-induced constraint $\mathcal{C}_{\mathrm{MR}}$ requires that all follow-up 
test cases (e.g., $(x+y)$, $(90^\circ - x)$, and $(90^\circ - y)$) must also stay within this valid range. 
To satisfy these constraints, the source test cases $x$ and $y$ must satisfy $x, y \in [-180^\circ, 180^\circ]$, 
and therefore the feasible range of $D^{st}$ becomes $[-180^\circ, 180^\circ]$.

Suppose that, in view of the available testing resources, a total of 30 test cases are to be selected. 
In other words, a maximum of 30 program executions can be performed. 
Based on the feasible range, $D^{st}$ is partitioned into three subdomains: 
$D^{st}_1 = [-180^\circ, -90^\circ)$, 
$D^{st}_2 = [-90^\circ, 0^\circ)$, and 
$D^{st}_3 = [0^\circ, 180^\circ]$.
}

\newcontent{
Here, $D^{st}_1$ and $D^{st}_2$ have equal sizes of $90^\circ$, whereas $D^{st}_3$ has a larger size of $180^\circ$. 
Accordingly, the number of source test cases selected from
$D^{st}_1$, $D^{st}_2$, and $D^{st}_3$ should be in the ratio $1:1:2$.

Since only 30 test executions are allowed, and each MG (with respect to MR$_b$) involves 2 source test cases ( $x$ and $y$ ) and 3 follow-up test cases
($(x + y)$, $(90^\circ - x)$, and $(90^\circ - y)$) (see Section~\ref{sec:mt}), 
at most 6 MGs can be executed. 
Accordingly, a total of 12 ($= 2 \times 6$) source test cases need to be selected from the adjusted $D^{st}$.
Thus, in Step~3, considering the relative size of the three subdomains (i.e., $1 : 1 : 2$), the numbers of source test cases selected from $D^{st}_1$, $D^{st}_2$, and $D^{st}_3$ are 3, 3, and 6, respectively. Also, in Step~4, a total of 18 ($= 3 \times 6$) follow-up test cases are generated from their corresponding source test cases.
}

In some occasions, in Step~4, the relative size of $D_i^\mathit{st}$s may lead to the situation where the number of source test cases to be selected from some $D_i^\mathit{st}$s is not an integer. 
We have developed the \newcontent{Basic Maximin Algorithm for MT (BMA-MT)} to address this issue. 
The essence of this algorithm is to achieve a better ``balanced'' distribution of selected test units (e.g., source test cases) among different selection subdomains (e.g., $D_i^\mathit{st}$s) by methodically selecting test units from the selection subdomain whose current sampling rate is the \textit{lowest}. 
However, when there are two or more selection subdomains with the same lowest sampling rate, BMA-MT suggests the \textit{next} test unit to be selected from one of these selection subdomains whose size is the \textit{largest}. 
\newcontent{If multiple such subdomains share the same largest size, then one of them is randomly selected.}

Algorithm~\ref{alg:BMA-MT} shows the detailed steps of \newcontent{BMA-MT}, in which $d$ denotes the size of the selection domain, $k$ denotes the number of selection subdomains, and $n$ denotes the total number of test units selected from the selection domain. 
Furthermore, for any $i$th ($1 \leqslant i \leqslant k$) selection subdomain, $d_i$ denotes its size; $n_i$ denotes the number of source test cases selected from it, and $\sigma_i$ denotes its sampling ratio.

\begin{algorithm}[!htbp]
  \caption{The Basic Maximin Algorithm for MT (BMA-MT)} \label{alg:BMA-MT}
  \SetKwComment{Comment}{/* }{ */}
   \textbf{Initialization:} $n_i \leftarrow 1$, $\sigma_i \leftarrow \frac{1}{d_i}$, $q \leftarrow n-k$\;
    \While{q $>$ 0}{
    \newcontent{
      \If{$\bigl|\operatorname*{argmin}_i (\sigma_i)\bigr| > 1$}{
          $\mathcal{M} \leftarrow \operatorname*{argmin}_i (\sigma_i)$\;
          \If{$\bigl|\operatorname*{argmax}_{m \in \mathcal{M}} (d_m)\bigr| > 1$}{
              $j \xleftarrow{\text{rand}} \operatorname*{argmax}_{m \in \mathcal{M}} (d_m)$\;
          }
          \Else{
              $j \leftarrow \operatorname*{argmax}_{m \in \mathcal{M}} (d_m)$\;
          }
      }
      \Else{
          $j \leftarrow \operatorname*{argmin}_i (\sigma_i)$\;
      }
    }
      $n_j \leftarrow n_j + 1$\;
      $\sigma_j \leftarrow \sigma_j + \frac{1}{d_j}$\;
      $q \leftarrow q - 1$\;
    }
    
\end{algorithm}

\par \newcontent{Consider an example where we need to select 25 test cases, among which 10 are source test cases selected from $D^{st}$. The relative size ratio of the three selection subdomains is $1:1:2$. According to BMA-MT:}

\newcontent{
\begin{itemize}
    \item[(1)] First, one source test case is selected from each selection subdomain.
    \item[(2)] When selecting the fourth source test case, $D^{st}_3$ has the lowest sampling rate. 
    Therefore, the fourth source test case is selected from $D^{st}_3$.
    \item[(3)] When selecting the fifth source test case, all subdomains have the same sampling rate. 
    Since $D^{st}_3$ has the largest size, the fifth source test case is selected from $D^{st}_3$.
    \item[(4)] After selecting the fifth source test case, $D^{st}_1$ and $D^{st}_2$ have the same lowest sampling rate and the same size. 
    Thus, the sixth source test case is arbitrarily selected from either $D^{st}_1$ or $D^{st}_2$.
    \item[(5)] Repeat the above steps until a total of 10 source test cases are selected. 
    At the end of the process, the number of selected source test cases can be: 
    $D^{st}_1$: 3, $D^{st}_2$: 2, and $D^{st}_3$: 5.
\end{itemize}
}

As a side note for BMA-MT, when the relative size of the selection subdomains causes the number of source test cases to be selected from each subdomain to be an integer, applying BMA-MT for selecting source test cases will produce the same result as directly selecting these source test cases from the selection subdomains based on the ratio of their sizes, without applying BMA-MT\@.


\subsection{Applying PSALM for MG Selection}
\label{sec:applyPSS4MGs}

\newcontent{When generating MGs from source test cases under multiple MRs, the total number of MGs can be very large.
Executing all MGs is often infeasible; thus, a subset of MGs has to be selected.}
For MG selection, the selection is done on a set of generated MGs (called the \textit{MG domain},
and is denoted by $D^\mathrm{MG}$), with respect to all the defined MRs.

In MT, besides using PSSLM for selecting source test cases (as discussed in Section~\ref{sec:applyPSS4stcs}), the selection focus could also be on MGs. In other words, the selection units are no longer source test cases, but the MGs.



Below we list the major steps of using PSALM to select MGs from $D^\mathrm{MG}$ for all the defined MRs (assuming that there are $k$ defined MRs):

\vspace{0.5ex} 
\begin{itemize}
  
    \item[(1)] Repeat this step until all the defined MRs have been processed:
    \begin{itemize}
    \item[(a)] Select any unprocessed MR (denoted by MR$_i$, where $1 \leqslant i \leqslant k$).
    \item[(b)] For all source test cases relevant to MR$_i$, generate the corresponding follow-up test cases according to MR$_i$.
    \newcontent{\item[(c)] Combine the relevant source and follow-up test cases to form the MGs with respect to MR$_i$. All these MGs collectively constitute an \textit{MG subdomain related to MR$_i$}, denoted by $D^{\mathrm{MG}}_{\mathrm{MR}_i}$ ($1 \leqslant i \leqslant k$).}
    \end{itemize}
  
   \item [(2)] Generate $D^\mathrm{MG}$ for all the $k$ defined MRs such that
   \newcontent{$D^{\mathrm{MG}} = D^{\mathrm{MG}}_{\mathrm{MR}_1} \cup D^{\mathrm{MG}}_{\mathrm{MR}_2} \cup \cdots \cup D^{\mathrm{MG}}_{\mathrm{MR}_k}.$}

   \item [(3)] Partition $D^{\mathrm{MG}}$ by considering the specification and parameters of the SUT, as well as the defined MRs.
        
  \item [(4)] Determine the number of MGs to be selected from each partitioned MG subdomain based on the relative sizes of these subdomains.
  
  \item [(5)] Select the predetermined number of MGs from each MG subdomain according to Step~4\@. All the selected MGs (from various MG subdomains) together form the MG test set.
\end{itemize}

\vspace{0.5ex} 
\newcontent{
We illustrate how to apply the above steps using the $\mathit{sin}(\,)$ example mentioned in Section~\ref{sec:mt}.
Suppose that, for this program, two MRs have been defined: MR$_b$ and MR$_c$ (as described in Section~\ref{sec:mt}).
Due to resource constraints, we consider 800 source test cases for each MR.

For MR$_b$, each MG is constructed from two source test cases together with three follow-up test cases.
Therefore, these 800 source test cases can be paired to form 400 MGs. 
As for MR$_c$, each source test case is associated with exactly one follow-up test case, and hence each source test case directly corresponds to one MG.
As a result, the 800 source test cases yield 800 MGs.
}

\newcontent{Considering MR$_b$ and MR$_c$ together, we have a total of 1{,}200 MGs, which could be too resource demanding to test all of them. Suppose the available testing resources allow us to select 30 MGs out of the 1{,}200 for testing.
Below we illustrate how to apply PSS to do MG selection:
}

\vspace{0.5ex} 
\newcontent{
\begin{itemize}
    \item[(1)] We have two MG subdomains: one ($D^{\mathrm{MG}}_{\mathrm{MR}_b}$) is related to MR$_b$ and the other one ($D^{\mathrm{MG}}_{\mathrm{MR}_c}$) is related to MR$_c$.     \vspace{0.2em}
    As stated above, the size of $D^{\mathrm{MG}}_{\mathrm{MR}_b}$ and $D^{\mathrm{MG}}_{\mathrm{MR}_c}$ is 400 and 800, respectively.  

    \item[(2)] Generate the MG domain ($D^{\mathrm{MG}}$) as follows:
    \[
    D^{\mathrm{MG}} = D^{\mathrm{MG}}_{\mathrm{MR}_b} \cup D^{\mathrm{MG}}_{\mathrm{MR}_c},~\mathrm{where} |D^{\mathrm{MG}}| = 1{,}200.
    \]

    \item[(3)] Define a partition scheme for $D^{\mathrm{MG}}$.  
    Suppose that, using the defined partition scheme, $D^{\mathrm{MG}}$ is divided into MG subdomains of equal size (i.e., 300 each): $D^{\mathrm{MG}}_{1}$, $D^{\mathrm{MG}}_{2}$, $D^{\mathrm{MG}}_{3}$, and $D^{\mathrm{MG}}_{4}$.

    \item[(4)] Since only 30 MGs are to be selected, PSS is applied to allocate the number of MGs to be selected from each MG subdomain, by considering the size of this MG subdomain relative to that of other MG subdomains.
    Given that the relative sizes among $D^{\mathrm{MG}}_{1}$, $D^{\mathrm{MG}}_{2}$, $D^{\mathrm{MG}}_{3}$, and $D^{\mathrm{MG}}_{4}$ are 1:1:1:1, the allocation is $30/4 = 7.5$ MGs per MG subdomain.  
    After applying BMA-MT, 8 MGs are selected from two MG subdomains and 7 from the other two.  

    \item[(5)] The selected 30 MGs now form the MG test set for test execution and analysis.
\end{itemize}
}

\vspace{0.5ex} 
\newcontent{
Although source test case selection and MG selection can both be based on PSS, these selection processes have some fundamental differences. 
First, in terms of selection domain, $D^{st}$ contains individual source test cases extracted from $D$, whereas $D^\textrm{{MG}}$ contains tuples of source and follow-up test cases generated according to the defined MRs. 
Second, in terms of unit of selection, source test case selection focuses on individual source test cases, whereas MG selection focuses on individual MGs. Note that each MG involves source test cases together with their follow-up test cases. In other words, each MG selection does not only involve source test cases. Third, regarding the selection sequence, MG selection can only be performed until source test case selection is finished, because follow-up test cases can only be generated based on the selected source test cases.
}

\section{Theoretical Comparison between PSALM and RS in Selecting Source Test Cases and MGs}
\label{sec:pssVSrs}

\newcontent{
After illustrating the application of PSALM to source test case selection and MG selection with examples, we now provide theoretical analyses on the effectiveness of PSALM when compared to RS. 
This section establishes the mathematical foundations, showing that PSALM has the same or higher effectiveness than RS, in terms of fault detection using MT\@.
The symbols used in the subsequent analysis are summarized in Table~\ref{tab:symbols}.
}

\subsection{Using PSALM and RS for Source Test Case Selection}
\label{sec:pssVSrs4stcs}
In traditional testing, the notion of test case effectiveness is determined by the ability of a test case to trigger or reveal defects in the SUT.
This notion, however, cannot be directly applied to MT\@. Due to the oracle problem, we are unable to determine the correctness of the output generated by an individual test case. It is this reason why MT involves multiple executions so that we can check the relationship between source/follow-up test cases and source/follow-up outputs against an MR to determine the existence of a fault. 
A prerequisite of the original proof of PSS in traditional testing is the ability to determine whether or not an individual test case is failure-causing.
In the context of MT, however, failure cannot be determined at the level of individual source test cases.
Hence, the original proof of PSS is not directly applicable to MT when selecting source test cases.
Therefore, we need to develop a dedicated proof for applying PSALM to source test case selection in MT in this subsection.

\newcontent{First, we need to give a formal definition related to the ``contribution'' of a source test case to fault detection via the failure-causing probabilities of its associated MGs.}

\begin{table}[!ht]
    \centering
    \renewcommand{\arraystretch}{1.2}  
    \caption{Notation used in the theoretical analysis}
    \label{tab:symbols}
    \resizebox{!}{!}{
        \begin{tabular}{cl}
\toprule
\textbf{Symbol} & \textbf{Meaning} \\ 
\midrule
$D^{st}$ & Source test case domain \\
$D^{st}_i$ & $i$th subdomain of $D^{st}$ \\
$D^{\mathrm{MG}}$ & Metamorphic group (MG) domain \\
$D^{\mathrm{MG}}_i$ & $i$th subdomain of $D^{\mathrm{MG}}$ \\
$d$ & Size of $D^{st}$ \\
$d_i$ & Size of $D^{st}_i$ \\
$d^{\mathrm{MG}}$ & Size of $D^{\mathrm{MG}}$ \\
$d^{\mathrm{MG}}_i$ & Size of $D^{\mathrm{MG}}_i$ \\
$n$ & Number of selected test units from selection domain \\
$n_i$ & Number of selected test units from subdomain \\
$k$ & Number of subdomains \\ 
\bottomrule
\end{tabular}
    }
\end{table}
   
\begin{definition}[\bf Effectiveness of a source test case in revealing defects]
\label{def:stEffectiveness}
In MT, the effectiveness of a source test case (\textit{st}) is defined as the probability that the MGs associated with \textit{st} reveal defects (denoted by $\mathit{vp}(\mathit{st})$):
\[ vp(st) = \frac{\left| \mathrm{MG}_{st}^v \right|}{\left| \mathrm{MG}_{st} \right|},\]
where $\mathrm{MG}_{st}$ denotes the set of MGs constructed from \textit{st}, $\mathrm{MG}_{st}^v$ denotes the subset of those MGs that violate their corresponding MR, and $\left| * \right|$ denotes the size of a set.
\end{definition}

\newcontent{Given a source test case subdomain $D^{st}_i$, its fault detection effectiveness can be measured in terms of the mean fault detection effectiveness of all the source test cases in $D^{st}_i$, which reflects the probability of revealing a defect by any source test case randomly selected from $D^{st}_i$. This rationale is formally expressed as Definition~\ref{def:subdomainEffectiveness} below:}

\begin{definition}[\bf Effectiveness of a source test case subdomain in revealing defects]
\label{def:subdomainEffectiveness}
The effectiveness of $D^{st}_i$ to trigger defects (denoted by $\mathit{vr}_i$) is defined as:
\[
    vr_i=\frac{\sum_{j=1}^{d_i} vp\left(st_j\right)}{d_i}.
\]
\end{definition}

\newcontent{Similar to Definition~\ref{def:subdomainEffectiveness} above, we can define the fault detection effectiveness of the entire source test case domain ($D^{st}$) in terms of the probability that a source test case randomly chosen from $D^{st}$ reveals a defect. This probability is computed by averaging the fault detection effectiveness of all the source test cases in $D^{st}$. This leads to Definition~\ref{def:selectiondomainEffectiveness} below:}

\begin{definition}[\bf Effectiveness of a source test case domain in revealing defects]
\label{def:selectiondomainEffectiveness}
The effectiveness of a selection domain to trigger defects (denoted by $\mathit{vr}$) is defined as:
\[
    vr=\frac{\sum_{j=1}^{d} vp\left(st_j\right)}{d}.
\]
\end{definition}

\newcontent{In what follows, we adopt the standard probabilistic model used in prior work on the P-measure~\cite{Ntafos2001, AhlgrenBerezin2021, MayerGuderlei2006} for our proofs. 
First, we make the following two assumptions:

\vspace{0.5ex} 
\begin{itemize}
\item {\bf Assumption~1:} Selecting source test cases from $D^{st}$ and $D^{st}_i$
is assumed to be a selection with replacement. This assumption is made because the number of source test cases selected is often very small when compared with the size of $D^{st}$.
\item {\bf Assumption~2:} All the selection subdomains are non-overlapping. Note that, even if there are some overlapping subdomains, we can further divide these subdomains to make them non-overlapping so that this assumption can be met.
\end{itemize}

\vspace{0.5ex} 
Under the above two assumptions, we formalize the effectiveness of using PSALM and RS for source test case selection as follows\footnote{In the rest of the paper, unless otherwise stated, ``the effectiveness of PSALM and RS'' refers to the effectiveness of using PSALM and RS for source test case or MG selection in MT for fault detection.}.}

\begin{definition}[\bf Effectiveness of using PSALM for source test case selection]
\label{def:PPss}
The effectiveness of using PSALM for source test case selection (denoted by $P_{p}^{\mathit{st}}$) is defined as:
\[
P_{p}^{st} = 1 - \prod_{i=1}^k (1 - vr_i)^{n_i}.
\]
\end{definition}

\begin{definition}[\bf Effectiveness of using RS for source test case selection]
\label{def:PRT}
The effectiveness of using RS for source test case selection (denoted by $P_{r}^{\mathit{st}}$) is defined as:
\[
P_{r}^{st} = 1 - (1 - vr)^n.
\]
\end{definition}

Based on Definitions~\ref{def:PPss} and~\ref{def:PRT}, we have the following proposition:

\begin{proposition}[\bf Effectiveness comparison between using PSALM and RS for source test case selection]
\label{prop:EffectivenessComparisonSt}
Regardless of the partition scheme used, we have $P_{p}^{st} \geqslant P_{r}^{st}$.
\end{proposition}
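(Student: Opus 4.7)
The plan is to show the equivalent inequality $\prod_{i=1}^k (1-vr_i)^{n_i} \leq (1-vr)^n$, then reduce it to a standard Jensen-type estimate via proportional allocation. First I would take the natural logarithm of both sides (noting that $vr_i \in [0,1]$ and $vr \in [0,1]$, with the degenerate case $vr_i=1$ handled separately since the product vanishes and the inequality holds trivially). This converts the target to
\[
\sum_{i=1}^k n_i \ln(1-vr_i) \;\leq\; n \ln(1-vr).
\]

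Next I would substitute the proportional sampling relation $n_i = n\,d_i/d$ coming from the PSALM rule (Step~3 of Section~\ref{sec:applyPSS4stcs}), so that after dividing both sides by $n$ the inequality becomes
\[
\sum_{i=1}^k \frac{d_i}{d}\,\ln(1-vr_i) \;\leq\; \ln(1-vr).
\]
Then I would establish the key identity $vr = \sum_{i=1}^k (d_i/d)\,vr_i$, which follows directly by partitioning the sum $\sum_{j=1}^{d} vp(st_j)$ in Definition~\ref{def:selectiondomainEffectiveness} over the disjoint subdomains $D^{st}_1,\dots,D^{st}_k$ (valid by Assumption~2) and recognizing each block as $d_i \cdot vr_i$ via Definition~\ref{def:subdomainEffectiveness}. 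Hence $vr$ is a convex combination of the $vr_i$ with weights $w_i = d_i/d$ summing to $1$.

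The final step is to invoke Jensen's inequality applied to the concave function $f(x)=\ln(1-x)$ on $[0,1)$ (concavity from $f''(x)=-1/(1-x)^2<0$), which yields
\[
\sum_{i=1}^k w_i\,f(vr_i) \;\leq\; f\!\left(\sum_{i=1}^k w_i\,vr_i\right),
\]
i.e., exactly the reduced inequality. Reversing the chain of equivalences gives $P_{p}^{st}\geq P_{r}^{st}$, and since the argument nowhere uses anything about how $D^{st}$ was cut apart beyond disjointness, the conclusion holds for any partition scheme, as claimed.

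The main obstacle I anticipate is justifying that the proportional allocation $n_i = n\,d_i/d$ can be assumed exactly despite the integer rounding issue that motivated BMA-MT in Section~\ref{sec:applyPSS4stcs}; I would address this by framing the proposition as the ideal-allocation statement (consistent with the original PSS proofs of \cite{ChanChen1996, ChenTse2001}) and noting that Assumption~1 (sampling with replacement) together with the smallness of $n$ relative to $d$ makes the integer deviation a lower-order effect that does not disturb the direction of the inequality. A secondary subtlety is the boundary case where some $vr_i = 1$, which must be treated by convention ($\ln 0 = -\infty$) so that the product side is zero and the inequality is immediate.
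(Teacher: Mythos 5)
Your proof is correct, but it takes a genuinely different route from the paper's. The paper proves the proposition by induction on the number of subdomains $k$: it first handles the base case $k=2$ by deriving the identity $vr_2 - vr = r\,(vr - vr_1)$ with $r = n_1/n_2 = d_1/d_2$, setting $f(x,y) = x^{n_1+n_2} - (x+y)^{rn_2}(x-ry)^{n_2}$, and showing via the partial derivative $\partial f/\partial y$ that $f$ attains its minimum value $0$ at $y=0$; the inductive step then merges the first $k$ subdomains into a single subdomain $D'$ and reuses the two-subdomain case. Your argument instead establishes the convex-combination identity $vr = \sum_i (d_i/d)\,vr_i$ (which is immediate from Definitions~\ref{def:subdomainEffectiveness} and~\ref{def:selectiondomainEffectiveness} under Assumption~2) and applies Jensen's inequality to the concave function $\ln(1-x)$ -- equivalently, weighted AM--GM in the form $\prod_i (1-vr_i)^{d_i/d} \leqslant \sum_i (d_i/d)(1-vr_i) = 1-vr$, raised to the $n$th power. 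This handles arbitrary $k$ in a single step with no induction and no calculus, and is arguably more transparent about \emph{why} the result holds (proportional weights make $vr$ the barycentre of the $vr_i$). What the paper's route buys in exchange is an explicit elementary computation in the style of the original PSS proofs and a reusable two-subdomain lemma that the induction exploits. Note that your main anticipated obstacle -- assuming the exact allocation $n_i = n\,d_i/d$ despite integer rounding -- is shared by the paper's proof, which likewise assumes $n_1/d_1 = n_2/d_2$ exactly; the paper defers the rounding issue entirely to BMA-MT and does not incorporate it into the theoretical statement, so your framing of the proposition as an ideal-allocation result matches the paper's intent. Your separate treatment of the boundary case $vr_i = 1$ is a point of care the paper's proof glosses over (its derivative argument tacitly requires $x - ry = 1 - vr_2 > 0$).
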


\begin{proof}
We apply mathematical induction to prove Proposition~\ref{prop:EffectivenessComparisonSt}. We first consider the simplest case, which involves two selection subdomains (i.e., $k=2$). 
According to Definitions~\ref{def:PPss} and~\ref{def:PRT}, we have:
\[
P_{p}^{st} = 1 - (1 - vr_1)^{n_1} (1 - vr_2)^{n_2}, 
~
P_{r}^{st} = 1 - (1 - vr)^{n}.
\]

Under PSALM, test cases are selected in proportion to the relative size of the subdomains, and thus we have $n_1/d_1 = n_2/d_2$. Suppose $d_1\geqslant d_2$ and $r=n_1/n_2=d_1/d_2$. Then $r \geqslant 1$, and we have:
\begin{align*}
vr_2-vr=\frac{\sum_{j=1}^{d_2} vp\left(st_j\right)}{d_2}-\frac{\sum_{j=1}^{d} vp\left(st_j\right)}{d} \\
vr-vr_1=\frac{\sum_{j=1}^{d} vp\left(st_j\right)}{d}-\frac{\sum_{j=1}^{d_1} vp\left(st_j\right)}{d_1}.
\end{align*}

With the above assumptions, we have:
\[
D_1 \cup D_2 = D, ~ D_1 \cap D_2 = \emptyset,
\]

\noindent
which leads to the following:
\begin{align*}
vr_2-vr&=\frac{\sum_{j=1}^{d_2} vp\left(st_j\right)}{d_2}-\frac{\sum_{j=1}^{d} vp\left(st_j\right)}{d}\\
&=\frac{\sum_{j=1}^{d_2} vp\left(st_j\right)}{d_2}-\frac{\sum_{j=1}^{d_1} vp\left(st_j\right)+\sum_{j=1}^{d_2} vp\left(st_j\right)}{d_1 +d_2}\\
&=\frac{d_1\sum_{j=1}^{d_2} vp\left(st_j\right)-d_2\sum_{j=1}^{d_1} vp\left(st_j\right)}{d_2(d_1+d_2)}\\
&= \frac{d_1}{d_2}
   \left(
     \frac{\sum_{j=1}^{d_1} vp(st_j)
           + \sum_{j=1}^{d_2} vp(st_j)}
          {d_1 + d_2}
   \right) \\
&\qquad
   - \frac{d_1}{d_2}
     \left(
       \frac{\sum_{j=1}^{d_1} vp(st_j)}{d_1}
     \right) \\
&= r(vr-vr_1).
\end{align*}
The above further results in the following:
\begin{align*}
P_{p}^{st} - P_{r}^{st} &=(1-vr)^{n}-(1-vr_1)^{n_1}(1-vr_2)^{n_2}\\
&=(1-vr)^{n_1+n_2}-[1-vr+(vr-vr_1)]^{rn_2}\\
&\qquad\times[1-vr-r(vr-vr_1)]^{n_2}.
\end{align*}

Let: $(x=1-vr)$ and $(y=vr-vr_1)$, then we have:
\begin{align*}
P_{p}^{st} - P_{r}^{st}= x^{n_1+n_2}-(x+y)^{rn_2}(x-ry)^{n_2}.
\end{align*}

Suppose that $f(x,y)=x^{n_1+n_2}-(x+y)^{rn_2}(x-ry)^{n_2}$. Its partial derivative with respect to $y$ is:           
\begin{align*}
\frac{\partial f(x, y)}{\partial y} &=-r n_2(x+y)^{r n_2-1}(x-r y)^{n_2}\\
&\qquad+r n_2(x+y)^{r n_2}(x-r y)^{n_2-1}\\
&= r n_2 (x - r y)^{n_2 - 1} (x + y)^{r n_2 - 1} (1 + r) y.
\end{align*}
Because: 
(a)~$(x-ry)=(1-vr_2)>0$, 
(b)~$(x+y)=(1-vr_1)>0$, 
(c)~$r \geqslant 1$, and 
(d)~$n_2 > 0$,
the sign of $\frac{\partial f(x, y)}{\partial y}$ depends on $y$, and the minimal value of $f(x,y)$ is 0 at $y=0$. Therefore, we have:
\begin{align*}
P_{p}^{st} - P_{r}^{st} \geqslant 0. \\
\end{align*}

\noindent
The above proves that Proposition~\ref{prop:EffectivenessComparisonSt} holds when $k=2$.

Next, we assume that Proposition~\ref{prop:EffectivenessComparisonSt} also holds when $k \ge 2$, i.e.:
\begin{align*}
1 - 
\prod_{i=1}^k
\left(1 - \frac{\sum_{j=1}^{d_i} vp(st_j)}{d_i}\right)^{n_i}
&\geqslant
1 - \left(1 - \frac{\sum_{j=1}^{d} vp(st_j)}{d}\right)^{n}.
\end{align*}
Equivalently, we have
\begin{align*}
\prod_{i=1}^k
\left(1 - \frac{\sum_{j=1}^{d_i} vp(st_j)}{d_i}\right)^{n_i}
&\leqslant
\left(1 - \frac{\sum_{j=1}^{d} vp(st_j)}{d}\right)^{n}.
\end{align*}

Now, we consider the case with $k+1$ selection subdomains. We have:

\begin{align*}
P_{p}^{st} 
&= 1 -
   \prod_{i=1}^{k+1}
   \left(
     1 - \frac{\sum_{j=1}^{d_i} vp(st_j)}{d_i}
   \right)^{n_i} \\
&= 1 -
   \prod_{i=1}^{k}
   \left(
     1 - \frac{\sum_{j=1}^{d_i} vp(st_j)}{d_i}
   \right)^{n_i} 
\\ &\qquad\quad
   \times
   \left(
     1 - \frac{\sum_{j=1}^{d_{k+1}} vp(st_j)}{d_{k+1}}
   \right)^{n_{k+1}} \\
&\geqslant
   1 -
   \left(
     1 - \frac{\sum_{j=1}^{d'} vp(st_j)}{d'}
   \right)^{n'}
   \\&\qquad\quad
   \times
   \left(
     1 - \frac{\sum_{j=1}^{d_{k+1}} vp(st_j)}{d_{k+1}}
   \right)^{n_{k+1}},
\end{align*}

\noindent
where $n' = \sum_{i=1}^k n_i$ and $d' = \sum_{i=1}^k d_i$.

We can consider the union of the first $k$ selection subdomains as a single selection subdomain denoted by $D' = D^{st}_1 \cup D^{st}_2 \cup \cdots \cup D^{st}_k$. 
Assuming that $D'$ contains $d'$ test cases and that the number of source test cases to be selected from $D'$ is $n'$, the case with $k+1$ selection subdomains can be reduced to the case of two selection subdomains, namely $D'$ and $D^{st}_{k+1}$.
Here, the above proof for $k=2$ selection subdomains can be reused, thus, we have:
\begin{align*}
P_{p}^{st} &\geqslant 1-\left(1-\frac{\sum_{j=1}^{d'} vp\left(st_j\right)}{d'}\right)^{n'}\left(1-\frac{\sum_{j=1}^{d_{k+1}} vp\left(st_j\right)}{d_{k+1}}\right)^{n_{k+1}}\\
&\geqslant 1-\left(1-\frac{\sum_{j=1}^{d} vp\left(st_j\right)}{d}\right)^n \geqslant P_{r}^{st}.
\end{align*}

The above shows that PSALM is at least as effective as RS in terms of fault detection for selecting source test cases when there are $k+1$ selection subdomains. 
Thus, by mathematical induction, Proposition~\ref{prop:EffectivenessComparisonSt} holds. 
In summary, we can conclude that, regardless of the partition scheme used and the number of selection subdomains, the effectiveness of PSALM for source test case selection is at least on par with that of RS\@.
\end{proof}

\subsection{Using PSALM and RS for MG Selection}
\label{sec:pssVSrs4MGs}

This subsection theoretically compares the effectiveness of using PSALM and RS for MG selection.
Note that, unlike source test case selection, which is performed over $D^{st}$, MG selection is performed over $D^\mathrm{MG}$.

Given an MR, suppose there are $m$ defect-triggering MGs in $D^\mathrm{MG}$ and $m_i$ defect-triggering MGs in $D^\mathrm{MG}_i$ ($1 \le i \le k$).
The effectiveness of using PSALM and RS for MG selection is then expressed in Definitions~\ref{def:PPSSMG} and~\ref{def:PRTMG} below:

\begin{definition}[\bf Effectiveness of using PSALM for MG selection]
\label{def:PPSSMG}
The effectiveness of using PSALM for MG selection (denoted by $P_{p}^{\mathrm{MG}}$) is defined as:
\[
P_{p}^{\mathrm{MG}}
= 1 - \prod_{i=1}^k 
\left( 1 - \frac{m_i}{d_i^{\mathrm{MG}}} \right)^{n_i},
\]
where $k$ is the number of MG subdomains, 
$d_i^{\mathrm{MG}}$ is the size of $D_i^{\mathrm{MG}}$, 
and $n_i$ is the number of MGs selected from $D_i^{\mathrm{MG}}$.
\end{definition}

\begin{definition}[\bf Effectiveness of using RS for MG selection]
\label{def:PRTMG}
The effectiveness of using RS for MG selection (denoted by $P_{r}^{\mathrm{MG}}$) is defined as:
\[
  P_{r}^{\mathrm{MG}} = 1 - \left(1 -\frac{m}{d^{\mathrm{MG}}}\right)^n,
\]
where $n$ is the number of MGs randomly selected from $D^\mathrm{MG}$.
\end{definition}

With the above definitions, we have the following proposition:
\begin{proposition}[\bf Comparison between using PSALM and RS for MG selection]
\label{prop:EffectivenessComparisonMG}
Regardless of the partition scheme used, we have $P_{p}^{\mathrm{MG}} \geqslant P_{r}^{\mathrm{MG}}$.
\end{proposition}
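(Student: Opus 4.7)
The plan is to follow the same mathematical-induction template used in the proof of Proposition~\ref{prop:EffectivenessComparisonSt}, but applied at the level of MGs rather than source test cases. The key observation that makes this reuse possible is that, unlike a source test case, each MG is itself a self-contained test unit that either violates its MR or not. Consequently, the ``failure rate'' of an MG subdomain $D_i^{\mathrm{MG}}$ can be written directly as $m_i/d_i^{\mathrm{MG}}$, and the failure rate of the entire MG domain is $m/d^{\mathrm{MG}}$, where under Assumption~2 (non-overlapping subdomains) we have $m=\sum_{i=1}^k m_i$ and $d^{\mathrm{MG}}=\sum_{i=1}^k d_i^{\mathrm{MG}}$. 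This structural simplification sidesteps the averaging machinery of Definitions~\ref{def:stEffectiveness}--\ref{def:selectiondomainEffectiveness} and aligns MG selection closely with the classical PSS setting, so the inequality reduces to a direct analog of the original PSS vs.\ RS argument.

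First I would handle the base case $k=2$. Under PSALM, the proportional condition $n_1/d_1^{\mathrm{MG}}=n_2/d_2^{\mathrm{MG}}$ holds, so after WLOG ordering $d_1^{\mathrm{MG}}\geqslant d_2^{\mathrm{MG}}$ I set $r=n_1/n_2=d_1^{\mathrm{MG}}/d_2^{\mathrm{MG}}\geqslant 1$. Writing $f_i=m_i/d_i^{\mathrm{MG}}$ and $f=m/d^{\mathrm{MG}}$, the same algebraic manipulation used to derive $vr_2-vr=r(vr-vr_1)$ in Proposition~\ref{prop:EffectivenessComparisonSt} now yields $f_2-f=r(f-f_1)$. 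Substituting $x=1-f$ and $y=f-f_1$, the difference $P_{p}^{\mathrm{MG}}-P_{r}^{\mathrm{MG}}$ reduces to $x^{n_1+n_2}-(x+y)^{r n_2}(x-r y)^{n_2}$. I would then compute the partial derivative with respect to $y$ and show, using positivity of $(1-f_1)$ and $(1-f_2)$, that the expression attains its unique minimum value $0$ at $y=0$, yielding $P_{p}^{\mathrm{MG}}\geqslant P_{r}^{\mathrm{MG}}$ for $k=2$.

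For the inductive step, assuming the inequality holds for $k$ subdomains, I would collapse the first $k$ MG subdomains into a single aggregated subdomain $D'=\bigcup_{i=1}^k D_i^{\mathrm{MG}}$ with size $d'=\sum_{i=1}^k d_i^{\mathrm{MG}}$ and $n'=\sum_{i=1}^k n_i$. The induction hypothesis bounds the product over the first $k$ factors from above by $\left(1-\sum_{i=1}^k m_i/d'\right)^{n'}$, thereby reducing the $(k+1)$-subdomain case to a two-subdomain comparison between $D'$ and $D_{k+1}^{\mathrm{MG}}$, to which the already-proven base case applies.

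The main obstacle I anticipate is not algebraic but conceptual: ensuring the base-case derivation of $f_2-f=r(f-f_1)$ genuinely transfers. In Proposition~\ref{prop:EffectivenessComparisonSt} this identity came from averaging $vp(st_j)$ values over disjoint subdomains; here it requires the analogous additive decomposition $m=m_1+m_2$, which in turn relies on Assumption~2 so that no defect-triggering MG is double-counted across subdomains. I would therefore state both Assumptions~1 and~2 explicitly at the outset of the proof, noting that they apply verbatim with MGs replacing source test cases. Once this is clarified, the rest of the argument mirrors Proposition~\ref{prop:EffectivenessComparisonSt} step for step, and the conclusion that PSALM is never inferior to RS for MG selection, regardless of partition scheme, follows by induction on $k$.
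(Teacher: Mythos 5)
Your proposal is correct and takes essentially the same route as the paper: the paper itself does not write out the derivation but simply notes that, once an MG is treated as a binary (failure-causing or not) test unit, the argument is the classical PSS-versus-RS proof under the concept correspondence test case $\to$ MG, input (sub)domain $\to$ MG (sub)domain. Your explicit base case, the identity $f_2-f=r(f-f_1)$, and the inductive collapse of the first $k$ subdomains are exactly the instantiation of that analogy (mirroring the paper's proof of Proposition~1), so your writeup is a valid, and in fact more detailed, version of what the paper intends.
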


\newcontent{
In traditional testing, the test unit is an individual test case; whereas in MT, the test unit becomes an MG\@.
Note that both test units can be classified as either failure-causing or non-failure-causing.
After the MT-specific concepts related to MG selection are formally defined,
the reasoning structure for analysing the effectiveness of PSALM and RS becomes analogous to that of classical PSS.
Therefore, rather than restating the full proof, we do not present the detailed derivation here and provide Table~\ref{tab:symbolsMapping} to illustrate the correspondence between the relevant concepts in traditional testing and MT.
Accordingly, Proposition~\ref{prop:EffectivenessComparisonMG} establishes the corresponding effectiveness relationship for MG selection under the MT-specific definitions.
}

\newcontent{
\begin{table}[!ht]
    \centering
    \caption{Corresponding symbols used in the proof of Proposition~\ref{prop:EffectivenessComparisonMG}}
    \label{tab:symbolsMapping}
    \resizebox{\columnwidth}{!}{
        \begin{tabular}{ll}
\toprule
\textsc{Traditional testing} & \textsc{Metamorphic testing} \\
\midrule
\textit{Test case} & \textit{MG} \\
\textit{Failure-causing test case} & \textit{Failure-causing MG} \\
\textit{Non-failure-causing test case} & \textit{Non-failure-causing MG} \\
\textit{Input domain} & \textit{MG domain} \\
\textit{Input subdomain} & \textit{MG subdomain} \\
\textit{Test case selection from input subdomain} & \textit{MG selection from MG subdomain} \\
\bottomrule
\end{tabular}

    }
\end{table}
}

\section{Theoretical Comparison on Using PSALM between Source Test Case Selection and MG Selection}
\label{sec:phase1VSphase2}

\newcontent{
In this section, we compare the fault detection effectiveness of applying PSALM to source test case selection and to MG selection.
}


To establish the condition under which the two selections (one for source test cases and the other for MGs) using PSALM yield the same fault detection effectiveness, we first define the notion of equivalence between the partition scheme for the source test case domain and that for the MG domain.
\newcontent{Our aim is to induce a one-to-one correspondence between the source subdomains and the MG subdomains.}

\vspace{0.5ex} 
\begin{definition}[\bf Equivalence between two partition schemes]
\label{def:EquivalencePartition}
Given an MR, let:

\vspace{0.5ex} 
\begin{itemize}
  \item $\mathit{PS}_1$ be a partition scheme that partitions the source test case domain $D^{\mathrm{st}}$ into $k$ disjoint subdomains $\{D_1^{\mathrm{st}}, D_2^{\mathrm{st}}, \ldots, D_k^{\mathrm{st}}\}$; and
  \item $\mathit{PS}_2$ be a partition scheme that partitions the MG domain $D^{\mathrm{MG}}$ into $k$ disjoint subdomains $\{D_1^{\mathrm{MG}}, D_2^{\mathrm{MG}}, \ldots, D_k^{\mathrm{MG}}\}$.
\end{itemize}

\vspace{0.5ex} 
The two partition schemes $\mathit{PS}_1$ and $\mathit{PS}_2$ are considered \emph{equivalent} if both of the following conditions hold:

\vspace{0.5ex} 
\begin{itemize}
 \item For each $D_i^{\mathrm{st}}$ ($1 \leqslant i \leqslant k$), there exists an MG subdomain $D_j^{\mathrm{MG}}$ ($1 \leqslant j \leqslant k$) such that $\mathrm{MG}(D_i^{\mathrm{st}}) \subseteq D_j^{\mathrm{MG}}$,
 where $\mathrm{MG}(D_i^{\mathrm{st}})$ denotes the set of MGs constructed from all source test cases in $D_i^{\mathrm{st}}$.

 \item For each $D_j^{\mathrm{MG}}$ ($1 \leqslant j \leqslant k$), there exists a source subdomain $D_i^{\mathrm{st}}$ ($1 \leqslant i \leqslant k$) such that $\mathrm{ST}(D_j^{\mathrm{MG}}) \subseteq D_i^{\mathrm{st}}$,
 where $\mathrm{ST}(D_j^{\mathrm{MG}})$ denotes the set of source test cases associated with all MGs in $D_j^{\mathrm{MG}}$.
\end{itemize}
\end{definition}

\newcontent{To formalize the size relationship required for comparing the two selections using PSALM, we next establish the following lemma:

\begin{lemma}[\bf Size relationship between corresponding source test case subdomains and MG subdomains]
\label{lem:SubdomainSizeRelation}
Given an MR, if each source test case is associated with the same number of MGs, then for every pair of corresponding subdomains $(D_i^{st}, D_i^{MG})$ induced by the two equivalent partition schemes (one for $D^{st}$ and the other for $D^\mathrm{MG}$), there exists a constant $N$ such that:
\[
|D_i^\mathrm{{MG}}| = N \cdot |D_i^\mathrm{{st}}|.
\]
\end{lemma}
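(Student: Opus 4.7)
The plan is to combine the equivalence of the two partition schemes with a double-counting argument on source-test-case/MG incidences.

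First I would use the two mutual-inclusion conditions in Definition~\ref{def:EquivalencePartition} to upgrade them into a genuine bijection between source subdomains and MG subdomains. The first condition yields an index map $\pi$ with $\mathrm{MG}(D_i^{st}) \subseteq D_{\pi(i)}^{\mathrm{MG}}$, and the second yields $\rho$ with $\mathrm{ST}(D_j^{\mathrm{MG}}) \subseteq D_{\rho(j)}^{st}$. Using the disjointness of both partitions (Assumption~2 in Section~\ref{sec:pssVSrs4stcs}), I would argue that $\pi$ and $\rho$ must be mutually inverse bijections; otherwise a single source test case (or MG) would have to belong to two different blocks simultaneously, contradicting disjointness. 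After relabeling so that the pairing is $(D_i^{st}, D_i^{\mathrm{MG}})$, any $g \in D_i^{\mathrm{MG}}$ has all of its source test cases in $D_i^{st}$ and therefore lies in $\mathrm{MG}(D_i^{st})$, promoting the inclusion $\mathrm{MG}(D_i^{st}) \subseteq D_i^{\mathrm{MG}}$ to the equality $\mathrm{MG}(D_i^{st}) = D_i^{\mathrm{MG}}$.

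Second, I would perform a double count on the incidence set $I_i = \{(st, g) : st \in D_i^{st},\ g \in D_i^{\mathrm{MG}},\ st \text{ is a source test case of } g\}$. Since the MR fixes the structural template of every MG, each MG is constructed from a fixed number $s$ of source test cases (a property of the MR, independent of which subdomain we are in). Together with the lemma's hypothesis that every source test case participates in exactly $M$ MGs, summing over $st$ gives $|I_i| = M \cdot |D_i^{st}|$, while summing over $g$ gives $|I_i| = s \cdot |D_i^{\mathrm{MG}}|$. Setting $N := M/s$, which depends only on the MR and not on $i$, immediately yields $|D_i^{\mathrm{MG}}| = N \cdot |D_i^{st}|$.

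The hard part will be the first step: promoting the asymmetric ``$\subseteq$'' conditions of Definition~\ref{def:EquivalencePartition} into a genuine one-to-one pairing between the two families of blocks, and verifying that no MG whose source test cases all lie in $D_i^{st}$ can leak into a non-corresponding MG subdomain. Once that pairing and the resulting set equality are cleanly in place, the incidence-counting argument is essentially routine, and the constant $N$ falls out directly from the ratio of the two per-incidence multiplicities.
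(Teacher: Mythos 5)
Your proposal is correct and follows essentially the same route as the paper's proof: establish that the equivalence conditions close the source-test-case/MG incidence structure within each corresponding pair of subdomains, then double-count the incidences ($L\cdot|D_i^{st}| = r\cdot|D_i^{\mathrm{MG}}|$ in the paper's notation, $M\cdot|D_i^{st}| = s\cdot|D_i^{\mathrm{MG}}|$ in yours) to obtain $N = L/r$. Your first step is in fact somewhat more careful than the paper's, which simply asserts the closure from equivalence rather than explicitly deriving the bijection between blocks from disjointness.
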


\begin{proof}
Consider any pair of corresponding subdomains $(D_i^\mathrm{{st}}, D_i^\mathrm{{MG}})$.
By equivalence, all MGs associated with the source test cases in $D_i^\mathrm{{st}}$ lie entirely within $D_i^\mathrm{{MG}}$, and all source test cases associated with the MGs in $D_i^\mathrm{{MG}}$ lie entirely within $D_i^\mathrm{{st}}$.  
Hence, the bipartite graph induced by the source test cases and the MGs is closed within this pair. 
This closure guarantees that all adjacency relations are internal to $(D_i^\mathrm{{st}}, D_i^\mathrm{{MG}})$, so that the total number of edges can be counted from either side of the bipartite graph.

Given any given MR, each associated MG contains the same number of source test cases (determined solely by the input structure of the MR), and each source test case  is associated with the same number of MGs. 

Let $r$ denote the number of source test cases associated with each MG, and $L$ denote the number of MGs associated with each source test case.  
Also, let $E_i$ be the number of edges in the induced bipartite graph on $(D_i^\mathrm{{st}}, D_i^\mathrm{{MG}})$.

Counting edges from the source-test-case side of the graph yields:
\[
E_i = |D_i^\mathrm{{st}}| \cdot L;
\]
whereas counting edges from the MG side of the graph yields:
\[
E_i = |D_i^\mathrm{{MG}}| \cdot r.
\]
Now, equating the above two expressions gives:
\[
|D_i^\mathrm{{MG}}| = \frac{L}{r} \cdot |D_i^\mathrm{{st}}|.
\]
In the above equation, we will not encounter the case of division by zero, since $r$ denotes the number of source test cases associated with each MG and, hence, $r > 0$. Finally, letting $N = L/r$ completes the proof.
\end{proof}

Lemma~\ref{lem:SubdomainSizeRelation} above enables us to characterize the conditions under which using PSALM for selecting source test cases and selecting MGs achieve identical fault-detection effectiveness.
}

\newcontent{
\begin{proposition}[\bf Effectiveness equality between source test case selection and MG selection under the same partition scheme]
\label{prop:EffectivenessEquality}
Given an MR, if: (a)~the partition scheme for the source test case domain is equivalent to that for the MG domain, 
and (b)~each source test case is associated with the same number of MGs, 
then $P_p^{\mathrm{st}} = P_p^{\mathrm{MG}}$.
\end{proposition}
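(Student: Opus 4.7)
The plan is to reduce the equality $P_p^{\mathrm{st}}=P_p^{\mathrm{MG}}$ to a term-by-term comparison of the two product formulas in Definitions~\ref{def:PPss} and~\ref{def:PPSSMG}. Concretely, one needs to show (i) that the allocation counts $n_i$ coincide under the two PSALM applications, and (ii) that the per-subdomain failure rates satisfy $vr_i = m_i/d_i^{\mathrm{MG}}$ for every $i$. Once both hold, the products agree factor-by-factor and the conclusion is immediate.

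For step~(i), I would invoke Lemma~\ref{lem:SubdomainSizeRelation} to obtain $d_i^{\mathrm{MG}} = N \cdot d_i^{\mathrm{st}}$ for a constant $N$, from which $d^{\mathrm{MG}} = N \cdot d^{\mathrm{st}}$ and hence $d_i^{\mathrm{MG}}/d^{\mathrm{MG}} = d_i^{\mathrm{st}}/d^{\mathrm{st}}$. Under PSALM, the proportional rule $n_i = n \cdot d_i/d$ depends only on these ratios, so the $n_i$ computed from the source-test-case partition and from the MG partition coincide for the same total budget $n$. If the ideal allocation is non-integer, BMA-MT resolves the rounding identically in both cases, because its decisions depend only on relative subdomain sizes.

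For step~(ii), I would use a double-counting argument on the bipartite adjacency between source test cases in $D_i^{\mathrm{st}}$ and MGs in $D_i^{\mathrm{MG}}$. Let $L$ be the common number of MGs per source test case, and let $r$ be the fixed number of source test cases in each MG. By Definition~\ref{def:stEffectiveness}, $vp(st_j) = |\mathrm{MG}_{st_j}^v|/L$, so $vr_i = \frac{1}{L\,d_i^{\mathrm{st}}}\sum_{j=1}^{d_i^{\mathrm{st}}} |\mathrm{MG}_{st_j}^v|$. The equivalence conditions in Definition~\ref{def:EquivalencePartition} ensure that the bipartite adjacency is closed within each pair $(D_i^{\mathrm{st}}, D_i^{\mathrm{MG}})$, so every violating MG in $D_i^{\mathrm{MG}}$ is counted exactly $r$ times in this sum. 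Thus $\sum_j |\mathrm{MG}_{st_j}^v| = r\,m_i$, and substituting $d_i^{\mathrm{MG}} = (L/r)\,d_i^{\mathrm{st}}$ yields $vr_i = m_i/d_i^{\mathrm{MG}}$, as required.

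The main obstacle will be carefully justifying the closure of the bipartite structure within each pair $(D_i^{\mathrm{st}}, D_i^{\mathrm{MG}})$, namely that no violating MG ``leaks'' across subdomain boundaries in either direction. This follows from the two inclusions $\mathrm{MG}(D_i^{\mathrm{st}}) \subseteq D_i^{\mathrm{MG}}$ and $\mathrm{ST}(D_i^{\mathrm{MG}}) \subseteq D_i^{\mathrm{st}}$ guaranteed by equivalence; together they ensure that the double count is neither inflated by MGs whose source test cases lie outside $D_i^{\mathrm{st}}$ nor deflated by MGs in $D_i^{\mathrm{MG}}$ being missed. Once this closure is in hand, the double-counting step and the $n_i$ comparison are routine, and combining them gives the desired equality.
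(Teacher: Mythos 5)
Your proposal follows essentially the same route as the paper's proof: Lemma~\ref{lem:SubdomainSizeRelation} gives $d_i^{\mathrm{MG}} = N\,d_i^{\mathrm{st}}$ and hence equal allocations $n_i^{\mathrm{st}} = n_i^{\mathrm{MG}}$, and the per-subdomain failure rates are matched so that the two product formulas agree factor by factor. If anything, your double-counting step is slightly more careful than the paper's, which writes $m_i = \sum_j |\mathrm{MG}_{st_j}^v|$ and $vp(st_j) = |\mathrm{MG}_{st_j}^v|/N$\,---\,identities that hold literally only when each MG contains a single source test case (i.e., $r=1$)\,---\,whereas you track the factor $r$ explicitly and show that the two $r$-dependent discrepancies cancel to give $vr_i = m_i/d_i^{\mathrm{MG}}$ in general.
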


\begin{proof}
By Lemma~\ref{lem:SubdomainSizeRelation}, the equivalence between the two partition schemes and the fact that each source test case is associated with the same number of MGs imply that, for every pair of corresponding subdomains:
\[
|D_i^{\mathrm{MG}}| = N \cdot |D_i^{\mathrm{st}}|
\quad\text{and}\quad
d_i^{\mathrm{MG}} = N \cdot d_i^{\mathrm{st}},
\]
where $N$ is a constant determined by the MR\@.

Recall that under PSALM, the number of selected units in each subdomain is proportional to the size of that subdomain.  
Let $n$ be the total number of selected units. Then, we have:
\[
n_i^{\mathrm{st}} = n \cdot \frac{|D_i^{\mathrm{st}}|}{|D^{\mathrm{st}}|}
\quad\text{and}\quad
n_i^{\mathrm{MG}} = n \cdot \frac{|D_i^{\mathrm{MG}}|}{|D^{\mathrm{MG}}|}.
\]
Using $|D_i^{\mathrm{MG}}| = N|D_i^{\mathrm{st}}|$ and $|D^{\mathrm{MG}}| = N|D^{\mathrm{st}}|$ (see Lemma~\ref{lem:SubdomainSizeRelation}) yields:
\[
n_i^{\mathrm{MG}} = n_i^{\mathrm{st}}
\quad\text{for all } i.
\]

The effectiveness expressions in Definitions~\ref{def:PPss} and~\ref{def:PPSSMG} are therefore:
\begin{align*}
P_p^{\mathrm{st}} 
= 1 - \prod_{i=1}^{k} 
\left(1 - \frac{\sum_{j=1}^{d_i^{\mathrm{st}}} vp(st_j)}{d_i^{\mathrm{st}}}\right)^{n_i^{\mathrm{st}}}\\
P_p^{\mathrm{MG}}
= 1 - \prod_{i=1}^{k}
\left(1 - \frac{m_i}{d_i^{\mathrm{MG}}}\right)^{n_i^{\mathrm{MG}}}.
\end{align*}

For any source test case $st_j$, let $\mathrm{MG}_{st_j}$ and $\mathrm{MG}_{st_j}^v$ denote the sets of associated MGs and failure-causing MGs, respectively.
Definition~\ref{def:stEffectiveness} gives:
\[
vp(st_j) = \frac{|\mathrm{MG}_{st_j}^v|}{|\mathrm{MG}_{st_j}|}.
\]
Since each source test case is associated with exactly $N$ MGs with respect to the relevant MR, we have:
\[
vp(st_j) = \frac{|\mathrm{MG}_{st_j}^v|}{N}.
\]
In turn, we have:
\[
\frac{\sum_{j=1}^{d_i^{\mathrm{st}}} vp(st_j)}{d_i^{\mathrm{st}}}
= \frac{\sum_{j=1}^{d_i^{\mathrm{st}}} |\mathrm{MG}_{st_j}^v|}{N d_i^{\mathrm{st}}}.
\]

By Definition~\ref{def:EquivalencePartition}, 
the MGs in $D_i^{\mathrm{MG}}$ are exactly those that are associated with source test cases in $D_i^{\mathrm{st}}$.
Hence:
\[
m_i = \sum_{j=1}^{d_i^{\mathrm{st}}} |\mathrm{MG}_{st_j}^v|,
\quad\text{and}\quad
d_i^{\mathrm{MG}} = N d_i^{\mathrm{st}},
\]
which together yield:
\[
\frac{\sum_{j=1}^{d_i^{\mathrm{st}}} vp(st_j)}{d_i^{\mathrm{st}}}
= \frac{m_i}{d_i^{\mathrm{MG}}}.
\]
Combining:
\[
n_i^{\mathrm{MG}} = n_i^{\mathrm{st}}
\quad\text{and}\quad
\frac{\sum vp(st_j)}{d_i^{\mathrm{st}}} = \frac{m_i}{d_i^{\mathrm{MG}}},
\]
we can see that the two effectiveness formulae match term by term. Therefore, we have:
\[
P_p^{\mathrm{st}} = P_p^{\mathrm{MG}}.
\]
\end{proof}

}

\newcontent{

Recall that Proposition~\ref{prop:EffectivenessEquality} applies when:
(a)~each source test case is associated with the same number of MGs, and 
(b)~the partition scheme for the source test case domain is equivalent to that for the MG domain.
Now, we introduce two more propositions below.
Propositions~\ref{prop:condition1} and~\ref{prop:condition2} deal with the scenario where conditions~(a) and~(b) above are not met, respectively.

\begin{proposition}[\bf Effectiveness relationship between source test case selection and MG selection when source tests are associated with different number of MGs]
\label{prop:condition1}
Given an MR, if source test cases are associated with different numbers of MGs, the effectiveness of applying PSALM to source test case selection ($P_{p}^\mathrm{st}$) may be greater than, equal to, or less than that of applying PSALM to MG selection ($P_{p}^\mathrm{MG}$).
\end{proposition}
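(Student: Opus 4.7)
The plan is to prove Proposition~\ref{prop:condition1} by exhibiting three concrete instances that realise the orderings $P_p^{\mathrm{st}} > P_p^{\mathrm{MG}}$, $P_p^{\mathrm{st}} = P_p^{\mathrm{MG}}$, and $P_p^{\mathrm{st}} < P_p^{\mathrm{MG}}$, respectively. Since the proposition is an existence claim about all three orderings, a proof by construction (three parameterised examples) is the most direct route, and, in contrast to Proposition~\ref{prop:EffectivenessEquality}, no single closed-form derivation is needed.

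First, I would fix a minimal two-subdomain setting with equivalent partition schemes on $D^{\mathrm{st}}$ and $D^{\mathrm{MG}}$ (in the sense of Definition~\ref{def:EquivalencePartition}) but with different per-source multiplicities $L_1$ and $L_2$, where $L_i$ is the number of MGs that each source test case in $D_i^{\mathrm{st}}$ generates. Then $|D_i^{\mathrm{MG}}| = L_i \cdot |D_i^{\mathrm{st}}|$ with $L_1 \neq L_2$, which breaks the uniform scaling used in Lemma~\ref{lem:SubdomainSizeRelation}. The PSALM allocations $n_i^{\mathrm{st}}$ and $n_i^{\mathrm{MG}}$ then differ, and this mismatch is the lever that separates the two probabilities. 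For tractability, I would take the failure-causing MGs inside each $D_i^{\mathrm{MG}}$ to be spread uniformly across its source test cases, so that $vp(st_j)$ takes the constant value $m_i / d_i^{\mathrm{MG}}$ on every $st_j \in D_i^{\mathrm{st}}$.

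Under this parameterisation, both effectiveness expressions reduce to products of the form $\prod_i (1 - m_i / d_i^{\mathrm{MG}})^{n_i}$, so the only difference between them lies in the exponents $n_i^{\mathrm{st}}$ versus $n_i^{\mathrm{MG}}$. Concentrating failures in the subdomain favoured by source-based PSALM yields $P_p^{\mathrm{st}} > P_p^{\mathrm{MG}}$; concentrating them in the subdomain favoured by MG-based PSALM reverses the inequality; and choosing $m_1 / d_1^{\mathrm{MG}} = m_2 / d_2^{\mathrm{MG}}$ (a uniform failure density across subdomains) collapses both products to $(1 - \rho)^n$ and delivers the equality case despite $L_1 \neq L_2$. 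The main obstacle is the bookkeeping around $vp(st_j)$ when the multiplicities are uneven, since the source-side formula implicitly normalises each MG by $L_i$ whereas the MG-side formula does not; ensuring that the chosen failure assignment is realisable by some concrete MR and set of MGs is the delicate step, after which the three orderings follow by direct substitution into Definitions~\ref{def:PPss} and~\ref{def:PPSSMG}.
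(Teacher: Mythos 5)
Your proposal is correct and follows the same proof strategy as the paper: since the proposition only asserts that all three orderings are possible, both you and the authors argue by exhibiting three witness instances and substituting into Definitions~\ref{def:PPss} and~\ref{def:PPSSMG}. The constructions differ in which mechanism they exploit. The paper uses a single minimal example (three source test cases with multiplicities $1$, $2$, $3$, partitioned as $\{st_2\}$ versus $\{st_1,st_3\}$) and keeps the number of selections per subdomain the same on both sides; the three orderings are then produced by moving the defect-triggering label among MGs, so the separation comes from the fact that $vp(st_j)$ normalises each source test case's failing MGs by its own multiplicity while $m_i/d_i^{\mathrm{MG}}$ does not. Your construction instead makes $vp$ constant within each subdomain and isolates the other lever: with $|D_i^{\mathrm{MG}}|=L_i|D_i^{\mathrm{st}}|$ and $L_1\neq L_2$, the proportional allocations $n_i^{\mathrm{st}}$ and $n_i^{\mathrm{MG}}$ diverge, and concentrating failures in the low- or high-multiplicity subdomain flips the inequality, while a uniform failure density collapses both expressions to $(1-\rho)^n$. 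Your route is arguably cleaner in that it reduces the comparison to two products with identical bases and different exponents, and it makes explicit \emph{why} the equality of Proposition~\ref{prop:EffectivenessEquality} fails; the paper's route is more concrete and immediately checkable by hand. The only point to tidy up is the one you already flag: you should either choose sizes so that the proportional allocations are integers or explicitly work with the idealised real-valued allocation used in the theoretical model, since otherwise BMA-MT rounding would perturb the exponents.
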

}

\begin{proof}
We use an example to validate Proposition~\ref{prop:condition1}.
In this example, without loss of generality, given an MR, suppose that this MR is associated with multiple MGs, and each such MG contains one source test case and one follow-up test case. 

Suppose further that, with respect to a given MR, we have three source test cases: $\mathit{st}_1$, $\mathit{st}_2$, and $\mathit{st}_3$.  
Each of these source test cases generates a different number of follow-up test cases and, hence, is associated with a different numbers of MGs.  
Specifically, $\mathit{st}_1$ generates one follow-up test case and is therefore associated with one MG (denoted by $\mathrm{MG}_{11}$); 
$\mathit{st}_2$ generates two follow-up test cases and is associated with two MGs (denoted by $\mathrm{MG}_{21}$ and $\mathrm{MG}_{22}$); 
and $\mathit{st}_3$ generates three follow-up test cases and is associated with three MGs (denoted by $\mathrm{MG}_{31}$, $\mathrm{MG}_{32}$, and $\mathrm{MG}_{33}$).

We now examine the effectiveness of PSALM in source test case selection and MG selection under the above scenario.
Table~\ref{tab:example4proposition4} presents three cases.
In this table, defect-triggering MGs are annotated with the superscript $v$.

We first compute the effectiveness of PSALM in source test case selection. 
In Case~1, the source test case domain is divided into two subdomains: 
$\{\mathit{st}_2\}$ and $\{\mathit{st}_1, \mathit{st}_3\}$.
For $\mathit{st}_1$, since it generates only one MG (MG$_{11}$) and this MG is non-defect-triggering, we have $\mathit{vp}(\mathit{st}_1)=0$ (Definition~\ref{def:stEffectiveness}).  
For $\mathit{st}_2$, both associated MGs (MG$_{21}$ and MG$_{22}$) are non-defect-triggering, so $\mathit{vp}(\mathit{st}_2)=0$.  
For $\mathit{st}_3$, all three associated MGs (MG$^v_{31}$, MG$^v_{32}$, MG$^v_{33}$) are defect-triggering, so $\mathit{vp}(\mathit{st}_3)=1$.
By Definition~\ref{def:PPss}, we have:  
\[
P^{\mathit{st}}_p
= 1 - (1 - 0)\bigl(1 - (0 + 1)/2\bigr)
= 0.5.
\]

We then compute the effectiveness of PSALM in MG selection.
The MG domain is partitioned into two subdomains:
$\{\mathrm{MG}_{21}, \mathrm{MG}_{22}\}$ and
$\{\mathrm{MG}_{11}, \mathrm{MG}_{31}^v, \mathrm{MG}_{32}^v, \mathrm{MG}_{33}^v\}$.
For the first MG subdomain, $m_1 = 0$ and $d_1 = 2$.
For the second MG subdomain, there is one non-defect-triggering MG and three defect-triggering MGs, so $m_2=3$ and $d_2=4$.
By Definition~\ref{def:PPSSMG}, we have:
\[
P^{\mathrm{MG}}_p
= 1 - (1 - 0)(1 - 3/4)
= 0.75.
\]
Thus, for Case~1, we have $P^{\mathit{st}}_p < P^{\mathrm{MG}}_p$.

Applying the same computation to Case~2 and Case~3 in  
Table~\ref{tab:example4proposition4}, we obtain  
$P^{\mathit{st}}_p = P^{\mathrm{MG}}_p$ and  
$P^{\mathit{st}}_p > P^{\mathrm{MG}}_p$, respectively.
These three cases collectively show that, when source test cases are associated with different numbers of MGs, the effectiveness of PSALM in source test case selection may be greater than, equal to, or less than in MG selection, thereby validating  Proposition~\ref{prop:condition1}.
\end{proof}

\begin{table*}[ht]
    \centering
    \caption{Three cases for validating Proposition~\ref{prop:condition1}}
    \label{tab:example4proposition4}
    \resizebox{!}{!}{
        \begin{tabular}{cccccc}
  \toprule
  \textbf{Case} &
  \textbf{Source test case selection} &
  \textbf{MG selection} &
  \textbf{$P_{p}^{\mathrm{st}}$} &
  \textbf{$P_{p}^{\mathrm{MG}}$} &
  \textbf{Remarks} \\
  \midrule

  1 &
  \begin{tabular}[c]{@{}c@{}}\{$st_2$\}\\ \{$st_1, st_3$\}\end{tabular} &
  \begin{tabular}[c]{@{}c@{}}\{$\mathrm{MG}_{21}, \mathrm{MG}_{22}$\}\\
                               \{$\mathrm{MG}_{11}, \mathrm{MG}_{31}^v, \mathrm{MG}_{32}^v, \mathrm{MG}_{33}^v$\}\end{tabular} &
  0.50 &
  0.75 &
  $P_{p}^{\mathrm{st}} < P_{p}^{\mathrm{MG}}$ \\
  \midrule

  2 &
  \begin{tabular}[c]{@{}c@{}}\{$st_2$\}\\ \{$st_1, st_3$\}\end{tabular} &
  \begin{tabular}[c]{@{}c@{}}\{$\mathrm{MG}_{21}^v, \mathrm{MG}_{22}^v$\}\\
                               \{$\mathrm{MG}_{11}, \mathrm{MG}_{31}, \mathrm{MG}_{32}, \mathrm{MG}_{33}$\}\end{tabular} &
  1.00 &
  1.00 &
  $P_{p}^{\mathrm{st}} = P_{p}^{\mathrm{MG}}$ \\
  \midrule

  3 &
  \begin{tabular}[c]{@{}c@{}}\{$st_2$\}\\ \{$st_1, st_3$\}\end{tabular} &
  \begin{tabular}[c]{@{}c@{}}\{$\mathrm{MG}_{21}, \mathrm{MG}_{22}$\}\\
                               \{$\mathrm{MG}_{11}^v, \mathrm{MG}_{31}, \mathrm{MG}_{32}, \mathrm{MG}_{33}$\}\end{tabular} &
  0.50 &
  0.25 &
  $P_{p}^{\mathrm{st}} > P_{p}^{\mathrm{MG}}$ \\
  \bottomrule
\end{tabular}

    }
\end{table*}

\newcontent{\begin{proposition}[\bf Effectiveness relationship between source test case selection and MG selection under non-equivalent partition schemes]
\label{prop:condition2}
If the partition schemes for the source test case domain and the MG domain are non-equivalent, the effectiveness of applying PSALM to source test case selection ($P_{p}^{\mathrm{st}}$) may be greater than, equal to, or less than that of applying PSALM to MG selection ($P_{p}^{\mathrm{MG}}$).
\end{proposition}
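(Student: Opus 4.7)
The plan is to prove Proposition~\ref{prop:condition2} by direct construction, in parallel with the proof of Proposition~\ref{prop:condition1}. Since the statement is purely existential (the inequality can go in any of the three directions), an example-based argument is both natural and sufficient. I would exhibit three concrete pairs of partition schemes---one for $D^{st}$ and one for $D^\mathrm{MG}$---that violate the equivalence condition of Definition~\ref{def:EquivalencePartition}, and whose induced effectiveness values realize $P_{p}^{\mathrm{st}} < P_{p}^{\mathrm{MG}}$, $P_{p}^{\mathrm{st}} = P_{p}^{\mathrm{MG}}$, and $P_{p}^{\mathrm{st}} > P_{p}^{\mathrm{MG}}$ in turn.

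First, to isolate the role of the partition schemes, I would fix a setting in which each source test case is associated with the same number of MGs, so that the hypothesis of Proposition~\ref{prop:condition1} does not apply and any observed disparity must be attributed solely to the non-equivalence of the two partition schemes. A minimal setup would take, say, four source test cases $st_1,\ldots,st_4$ each generating the same number of follow-up test cases (and hence the same number $N$ of associated MGs), together with a fixed designated set of defect-triggering MGs marked by the superscript $v$. For each constructed case I would compute $P_{p}^{\mathrm{st}}$ by composing Definitions~\ref{def:stEffectiveness}, \ref{def:subdomainEffectiveness}, and~\ref{def:PPss}, and compute $P_{p}^{\mathrm{MG}}$ via Definition~\ref{def:PPSSMG}, presenting the results in a table analogous to Table~\ref{tab:example4proposition4}. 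Crucially, for each case I would explicitly verify that the equivalence condition of Definition~\ref{def:EquivalencePartition} fails by identifying a source-test-case subdomain whose associated MGs are split across two distinct MG subdomains (or symmetrically, an MG subdomain whose associated source test cases straddle two source subdomains), so that the examples are valid instances of the hypothesis.

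The main obstacle, I expect, is the design of the three examples rather than any algebraic difficulty. One needs a single small population in which only the partition schemes change, yet whose induced effectiveness values cover all three relations. Since PSALM's effectiveness is governed by how the failure mass is concentrated within large subdomains, the guiding intuition for the construction is to place defect-triggering MGs so that a large source-test-case subdomain has either a high or low average $vp$ relative to how those same MGs are allocated across MG subdomains. A scheme that clusters high-$vp$ source test cases into a large $D_i^{st}$ while dispersing the corresponding defect-triggering MGs into small MG subdomains should yield $P_{p}^{\mathrm{st}} > P_{p}^{\mathrm{MG}}$, and the symmetric construction should yield the reverse inequality; a balanced but still non-equivalent scheme serves as the equality case. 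The only remaining subtlety is to keep the three scenarios uniform enough (same source test cases, same MGs, same labelling of defects) that the table cleanly demonstrates that varying only the partition schemes is what drives the three outcomes.
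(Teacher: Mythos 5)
Your proposal is correct and takes essentially the same approach as the paper: a direct construction of three small examples with non-equivalent partition schemes (each source test case associated with the same number of MGs, so the disparity is attributable solely to the partition non-equivalence), whose computed effectiveness values via Definitions~\ref{def:PPss} and~\ref{def:PPSSMG} realize all three relations, summarized in a table analogous to Table~\ref{tab:example4proposition4}. The paper's proof uses three source test cases with two MGs each rather than four, but is otherwise the same argument.
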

}

\begin{proof}
To validate Proposition~\ref{prop:condition2}, we use an example in which it is assumed that each MG contains one source test case and one follow-up test case (as in the preceding example used for validating Proposition~\ref{prop:condition1}). 

Suppose, with respect to a given MR, we have three source test cases: $\mathit{st}_1$, $\mathit{st}_2$, and $\mathit{st}_3$.  
Suppose, further that, each of these source test cases generates two follow-up test cases and is therefore associated with two MGs. More specifically, $\mathit{st}_1$ is associated with $\mathrm{MG}_{11}$ and $\mathrm{MG}_{12}$. 
Similarly, $\mathit{st}_2$ is associated with $\mathrm{MG}_{21}$ and $\mathrm{MG}_{22}$; and $\mathit{st}_3$ is associated with $\mathrm{MG}_{31}$ and $\mathrm{MG}_{32}$.

We now compare the effectiveness of PSALM for source test case selection and under MG selection.  
Table~\ref{tab:example4proposition5} gives three cases involving using non-equivalent partition schemes.  
In this table, defect-triggering MGs are annotated with the superscript~$v$.

We look at Case~1 first.  
The source test case domain is partitioned into two subdomains:  
$\{\mathit{st}_1, \mathit{st}_2\}$ and $\{\mathit{st}_3\}$.  
Similarly, the MG domain is partitioned into two subdomains:
$\{\mathrm{MG}_{11}^v, \mathrm{MG}_{21}^v\}$ and  
$\{\mathrm{MG}_{12}, \mathrm{MG}_{22}, \mathrm{MG}_{31}, \mathrm{MG}_{32}\}$.

For source test case selection,  
$\mathit{vp}(\mathit{st}_1) = 0.5$ because one of its two MGs is defect-triggering;  
$\mathit{vp}(\mathit{st}_2)=0.5$ for the same reason;  
and $\mathit{vp}(\mathit{st}_3)=0$ (since both MGs associated with $\mathit{st}_3$ are non-defect-triggering).  
By Definition~\ref{def:PPss}, we have:
\[
P_{p}^{\mathrm{st}}
= 1 - \bigl( 1 - (0.5+0.5)/2 \bigr)(1 - 0)
= 0.5.
\]

For MG selection,  
the first MG subdomain contains two defect-triggering MGs, so $m_1=2$ and $d_1=2$;  
the second subdomain contains only non-defect-triggering MGs, so $m_2=0$.  
By Definition~\ref{def:PPSSMG}:
\[
P_{p}^{\mathrm{MG}}
= 1 - (1 - 1)(1 - 0)
= 1.
\]
Thus $P_{p}^{\mathrm{st}} < P_{p}^{\mathrm{MG}}$ for Case~1.

Using the same computation method,  
Case~2 yields $P_{p}^{\mathrm{st}} = P_{p}^{\mathrm{MG}}$,  
and Case~3 yields $P_{p}^{\mathrm{st}} > P_{p}^{\mathrm{MG}}$.
These results collectively validate Proposition~\ref{prop:condition2}.
\end{proof}

\begin{table*}[!ht]
    \centering
    \caption{Three cases for validating Proposition~\ref{prop:condition2}}
    \label{tab:example4proposition5}
    \resizebox{!}{!}{
        
    \begin{tabular}{cccccc}
        \toprule
        \textbf{Case} &
        \textbf{Source test case selection} &
        \textbf{MG selection} &
        \textbf{$P_{p}^{\mathrm{st}}$} &
        \textbf{$P_{p}^{\mathrm{MG}}$} &
        \textbf{Remarks} \\
        \midrule

        1 &
        \begin{tabular}[c]{@{}c@{}}\{$st_1, st_2$\}\\ \{$st_3$\}\end{tabular} &
        \begin{tabular}[c]{@{}c@{}}\{$\mathrm{MG}_{11}^v,\mathrm{MG}_{21}^v$\}\\
                                   \{$\mathrm{MG}_{12},\mathrm{MG}_{22},\mathrm{MG}_{31},\mathrm{MG}_{32}$\}\end{tabular} &
        0.50 & 1.00 &
        $P_{p}^{\mathrm{st}} < P_{p}^{\mathrm{MG}}$ \\
        \midrule

        2 &
        \begin{tabular}[c]{@{}c@{}}\{$st_2$\}\\ \{$st_1, st_3$\}\end{tabular} &
        \begin{tabular}[c]{@{}c@{}}\{$\mathrm{MG}_{11}^v, \mathrm{MG}_{12}$\}\\
                                   \{$\mathrm{MG}_{21}^v, \mathrm{MG}_{22},\mathrm{MG}_{31},\mathrm{MG}_{32}$\}\end{tabular} &
        0.63 & 0.63 &
        $P_{p}^{\mathrm{st}} = P_{p}^{\mathrm{MG}}$ \\
        \midrule

        3 &
        \begin{tabular}[c]{@{}c@{}}\{$st_1$\}\\ \{$st_2, st_3$\}\end{tabular} &
        \begin{tabular}[c]{@{}c@{}}\{$\mathrm{MG}_{12}$\}\\
                                   \{$\mathrm{MG}_{11}^v, \mathrm{MG}_{21}^v, \mathrm{MG}_{22},\mathrm{MG}_{31},\mathrm{MG}_{32}$\}\end{tabular} &
        0.63 & 0.40 &
        $P_{p}^{\mathrm{st}} > P_{p}^{\mathrm{MG}}$ \\
        \bottomrule
    \end{tabular}

    }
\end{table*}

\newcontent{
In summary, Propositions~\ref{prop:EffectivenessEquality}, \ref{prop:condition1}, and~\ref{prop:condition2} together suggest that, when the partition scheme for the source test case domain is equivalent to that for the MG domain, and each source test case is associated with the same number of MGs, then we have $P_p^{st} = P_p^\mathrm{MG}$. Otherwise, we cannot theoretically determine whether $P_p^{st}$ is larger than, equal to, or less than $P_p^\mathrm{MG}$.
}
\section{Research Questions and Experimental Setup}
\label{sec:experimentSetup}
To evaluate the effectiveness and efficiency of PSALM, we designed and performed a series of experiments. 
This section first introduces the research questions that guide our empirical study. Then, we discuss the subject programs, comparison baselines, evaluation metrics, configurations, and the replication package.



\subsection{Research Questions}
\label{sec:rq}

\newcontent{To systematically evaluate the effectiveness and limitations of PSALM, we performed our empirical study around the following three research questions:

\vspace{0.5ex} 
\begin{itemize}
  \item \textbf{RQ1:} \RQContent{1}
  \item \textbf{RQ2:} \RQContent{2}
  \item \textbf{RQ3:} \RQContent{3}
\end{itemize}

\vspace{0.5ex} 
RQ1 investigates whether PSALM achieves its fundamental goal of improving fault-detection effectiveness over RS, which is argued to be the most widely used baseline for benchmarking~\cite{BarusChen2016, SunLiu2022}. 
RQ2 investigates the relative effectiveness of PSALM in two important selections of MT (see Steps~2 and~4 of MT in Section~\ref{sec:intro}). 
Finally, RQ3 investigates the effectiveness of PSALM in the broader context by comparing it with two other representative selection strategies (ART and MT-ART), thereby highlighting its relative advantages and limitations. 
Thus, these three research questions together provide a comprehensive evaluation of PSALM\@. 
}

\subsection{Subject Programs}
\label{sec:subjectPrograms}

\newcontent{
We evaluated PSALM using two groups of subject programs.
The first group consists of three benchmark programs used in prior work~\cite{ChenYu1994}, all implemented in Python (see Table~\ref{tab:subjectPrograms}).
To capture diverse partitioning characteristics, these programs were selected to represent three typical scenarios:
(i) subdomains with non-integer size ratios,
(ii) subdomains whose sizes differ by several orders of magnitude, and
(iii) input domains that induce an unbounded number of subdomains, which were merged into a finite set for feasibility.
}

\newcontent{
The second group comprises five methods extracted from Defects4J projects~\cite{JustJalali2014} (see Table~\ref{tab:subjectPrograms}).
For the rest of this paper, we refer to the programs in Group~1 and the methods in Group~2 collectively as ``programs''.
Defects4J is a widely used benchmark of real-world Java programs.
The selected programs cover diverse application domains, including date and time processing, string parsing, mathematical utilities, and graphical libraries.
Compared with the programs in the first group, these programs are larger and more complex, allowing us to assess the practicality and scalability of PSALM in realistic software systems.
}

\newcontent{
Mutants were generated for the programs in the first group using \textsc{Mutmut}~\cite{hovmoller2025}, and for the programs in the second group using \textsc{Major}~\cite{Just2014}.
In both cases, we first removed equivalent and trivial mutants.
We then applied subsumption analysis to eliminate subsumed mutants, retaining only non-subsumed mutants.
This was to reduce validity threats arising from the inclusion of redundant mutants.
}

\newcontent{
Thereafter, we defined sets of MRs to serve as the test oracles. For each subject program, we manually defined MRs based on the inherent properties of the target functionality. The number of MRs for each program is summarized in Table~\ref{tab:subjectPrograms}.
This table gives information about each subject program, including its implementation language, lines of code (LOC), the number of retained mutants, the number of defined MRs, and functionality.
The complete set of subject programs and MRs is provided in our replication package.
\footnote{\, To facilitate transparency and reproducibility, we provide an artifact
containing the experimental code and scripts used in this study.
The artifact is publicly available at \url{https://doi.org/10.5281/zenodo.17926177}.}
}


\begin{table*}[!ht]
    \ncolor
    \centering
    \caption{Subject programs/methods used in our empirical study}
    \label{tab:subjectPrograms}

\begin{tabularx}{\linewidth}{lccccX}
\toprule
Program/method & Language & LOC & \#\,Mutants & \#\,MRs & Functionality \\
\midrule
MortgageRate (MoR) & Python & 17 & 4 & 6 & Computes the mortgage rate \\
IncomeTax (InT) & Python & 27 & 18 & 3 & Computes the income tax \\ 
GeometricSum (GeS) & Python & 18 & 4 & 3 & Computes the geometric sum \\ \midrule
MathUtils.copySign (CopyS) & Java & 269 & 35 & 5 & Returns the first argument with the sign of the second \\
NumberInput.parseInt (ParseI) & Java & 319 & 45 & 5 & Parses a string into an integer value \\
ShapeUtilities.createLineRegion (CLine) & Java & 709 & 58 & 6 & Creates a line region \\
MathArrays.convolve (Conv) & Java & 1537 & 14 & 5 & Computes the convolution of two arrays \\
DateUtils.isSameDay (IsDay) & Java & 1888 & 6 & 5 & Checks whether two dates fall on the same day \\
\bottomrule
\label{tab:subjectprograms}
\end{tabularx}
\end{table*}



\subsection{Baseline selection strategies}
\label{sec:baselines}
\newcontent{
We selected baseline selection strategies according to the following two criteria: 
(a)~these strategies have been used in previous MT studies, and 
(b)~these strategies do not rely on execution feedback information. 
Criterion~(b) ensures that any observed difference in MT performance between PSALM and a baseline selection strategy is due to the intrinsic mechanism of the strategy itself, rather than advantages gained from additional runtime information. 
With criteria~(a) and~(b), we selected three baseline strategies: Random Selection, Adaptive Random Testing (ART), and MT-ART.}

\newcontent{
\textbf{Random Selection (RS).} RS is the most widely-used selection strategy in existing MT studies~\cite{BarusChen2016,SunLiu2022} and, thus, naturally inclines itself to be a baseline strategy. Moreover, our theoretical analysis in Section~\ref{sec:pssVSrs} demonstrates that PSALM is never inferior to RS in terms of fault detection effectiveness. Therefore, using RS as a baseline strategy allows us to validate our theoretical analysis in practice.}

\newcontent{
\textbf{Adaptive Random Testing (ART).} ART is a classical improvement over random testing and has also been applied in the MT context~\cite{BarusChen2016}. ART aims to enhance the effectiveness of RS by distributing test cases more evenly across the input space. Since PSALM adopts the concept of proportional sampling (which is related to test case distribution), comparing PSALM with ART allows us to evaluate the relative advantages between these two test case distribution strategies in the MT context.}

\newcontent{
\textbf{MT-ART.} MT-ART is a technique proposed by Hui et al.~\cite{HuiWang2021}, which generates MGs using distance metrics to improve fault detection effectiveness. Their results show that, besides MRs, selecting source and follow-up test cases (i.e., MGs) also plays a significant part in the effectiveness of MT~\cite{HuiWang2021}. Thus, MT-ART is a representative baseline strategy for MG selection.}

\newcontent{
Another reason for selecting ART and MT-ART as our baseline strategies was that both techniques correspond to the two selection steps of PSALM\@. 
ART, originally designed as a general test case generation strategy, can be applied to source test case selection, whereas MT-ART was specifically developed for MT and focuses on MG selection (where both source and follow-up test cases are selected together as MGs). 
Choosing ART and MT-ART enables us to benchmark PSALM with representative strategies for both source test case selection and MG selection.}

\subsection{Evaluation Metrics}
\label{sec:metrics}
\newcontent{
We used the following metrics to assess the effectiveness and efficiency of PSALM and the baseline strategies:}

\vspace{0.05ex} 
\newcontent{
\textbf{\textit{P}-measure.} This metric quantifies the probability that the selected test cases or MGs reveal a fault. Higher $P$-measure values thus indicate stronger fault detection effectiveness. We conducted significance testing using the two-tailed t-test with a confidence level of 0.05\@. 
}

\newcontent{
\textbf{Vargha--Delaney A12.}
This non-parametric effect-size statistic quantifies the probability that PSALM achieves a higher $P$-measure than a baseline strategy.
In our setting, an A12 value greater than 0.5 indicates that PSALM tends to outperform the baseline, whereas a value below 0.5 indicates the opposite.
An A12 value of 0.5 indicates no performance difference between the two strategies.
}

\subsection{Configurations}
\label{sec:configurations}
\newcontent{
Below we discuss our experiment configurations to address RQ1 to RQ3:}

\vspace{0.5ex} 
\newcontent{
\textbf{Selection sizes.}
In all our experiments, we set the number of selected source test cases and MGs to be three times the number of subdomains of the corresponding subject program.
This configuration allowed sufficient coverage of each subdomain while avoiding excessive test cases that would make our experiments unmanageable.
}

\newcontent{
\textbf{Repetitions.} Each setting was independently repeated 30 times.
This follows common practice in software testing experiments, where 30 repetitions provide sufficiently stable estimates of the $P$-measure while mitigating the impact of randomness. 
In each trial, 1{,}000 iterations were performed to compute the $P$-measure, which further reduces variance and improves the reliability of the statistical results.
}

\newcontent{
\textbf{Environment.} All experiments were conducted on a workstation equipped with an Intel Xeon Silver~4210R CPU @\,2.40\,GHz, an NVIDIA RTX~3090 GPU (24\,GB), and 128\,GB memory, with Ubuntu~22.04.4 LTS as the operating system. The three Python benchmark programs in Group~1 were implemented and executed with Python~3.9.7, and the five Java programs in Group~2 from Defects4J were compiled and executed with Java JDK~21 in a Maven-managed environment.
}

\section{Experimental Results and Analysis}
\label{sec:results}

\newcontent{This section reports the results of our empirical study and answers the three research questions in Section~\ref{sec:rq}. 
Each subsection presents the results related to one research question, followed by the related analysis.}

\subsection{RQ1: Performance comparison between PSALM and RS}
\newcontent{
RQ1 investigates whether or not PSALM consistently achieves equal or higher fault detection effectiveness than RS in practice.
We report the mean $P$-measures for the 30 trials (to avoid verbosity, for the rest of this paper, mean $P$-measure will simply be referred to as ``$P$-measure'')
of using PSALM and RS across all subject programs, and analyze their differences in terms of statistical significance and effect size.
Tables~\ref{tab:rq1phase1} and~\ref{tab:rq1phase2} give the $P$-measures across all mutants, together with the percentage improvements\,%
\footnote{\,In Tables~\ref{tab:rq1phase1}, \ref{tab:rq1phase2}, \ref{tab:rq2}, \ref{tab:rq3phase1}, and~\ref{tab:rq3phase2},
the 4th column ``Improvement (\%)'' shows the percentage improvement of the $P$-measure in the 2nd column over the $P$-measure in the 3rd column.},
$p$-values, and A12 statistics, for test case selection and MG selection, respectively.
}

\newcontent{
We first look at Table~\ref{tab:rq1phase1} 
which shows the performance of PSALM and RS in source test case selection.
Overall, PSALM outperforms RS (A12 $\geqslant 0.970$; $p < 0.05$) in six (MoR, GeS, InT, CopyS, CLine, and IsDay) out of the eight subject programs.
Also, the percentage improvements are substantial for InT (236.27\%), CLine (71.1\%), and IsDay (341.1\%).
Only ParseI and Conv show very small negative percentage improvements ($-$0.01\% or $-$2.36\%).
Next, Table~\ref{tab:rq1phase2} shows the performance of PSALM and RS in MG selection. 
Overall, PSALM achieves much better performance in seven subject programs (the top seven in Table~\ref{tab:rq1phase2}), with very large A12 ($\geqslant 0.967$) and $p < 0.05$.
Among these seven programs, InT shows the largest percentage improvement (436.14\%).

Notably, the improvement of PSALM over RS varies between source test case selection and MG selection.
For example, 
with respect to IsDay, PSALM shows a large improvement over RS in source test case selection (percentage improvement = 341.13\%; A12 $=$ 1.000), yet exhibits comparable performance to RS in MG selection (percentage improvement $=$ 1.28\%; A12 $=$ 0.598). Such variation may be caused by changes in the selection domain ($D^{st}$ or $D^\mathrm{MG}$) and the distribution of failure-causing test units (source test cases or MGs) in the selection domain.

We further analyze the performance between PSALM and RS at the mutant level. Figure~\ref{fig:rq1phase1} shows the distribution of mutants among the three outcome categories: PSALM better, no difference, and RS better, with respect to source test case selection. Figure~\ref{fig:rq1phase2} shows a similar mutant distribution, but this time for MG selection.
}

\begin{table}[!ht]
    \ncolor
    \centering
    \caption{Comparison between PSALM and RS in source test case selection}
    \label{tab:rq1phase1}
    \resizebox{\columnwidth}{!}{
        \begin{tabular}{lccccc}
\toprule
Program & $\overline{P}_{\mathrm{PSALM}}$ & $\overline{P}_{\mathrm{RS}}$ & Improvement (\%) & $p$-value & A12 \\
\midrule
MoR & 0.9124 & 0.8996 & 1.42 & $<0.05$ & 0.987 \\
GeS & 0.5038 & 0.5014 & 0.49 & $<0.05$ & 0.970 \\
InT & 0.2577 & 0.0766 & 236.27 & $<0.05$ & 1.000 \\
\midrule
CopyS & 0.8851 & 0.8793 & 0.66 & $<0.05$ & 0.999 \\
ParseI & 0.9994 & 0.9995 & $-$0.01 & $<0.05$ & 0.251 \\
CLine & 0.7317 & 0.4277 & 71.06 & $<0.05$ & 1.000 \\
Conv & 0.8754 & 0.8966 & $-$2.36 & $<0.05$ & 0.000 \\
IsDay  & 0.2793 & 0.0633 & 341.13 & $<0.05$ & 1.000 \\
\bottomrule
\end{tabular}
    }
\end{table}

\begin{table}[!ht]
    \ncolor
    \centering
    \caption{Comparison between PSALM and RS in MG selection}
    \label{tab:rq1phase2}
    \resizebox{\columnwidth}{!}{
        \begin{tabular}{lccccc}
\toprule
Program & $\overline{P}_{\mathrm{PSALM}}$ & $\overline{P}_{\mathrm{RS}}$ & Improvement (\%) & $p$-value & A12 \\
\midrule
MoR & 0.9952 & 0.7849 & 26.79 & $<0.05$ & 1.000 \\
GeS & 0.5000 & 0.4995 & 0.10 & $<0.05$ & 0.967 \\
InT & 0.4444 & 0.0829 & 436.14 & $<0.05$ & 1.000 \\
\midrule
CopyS & 0.9769 & 0.9431 & 3.59 & $<0.05$ & 1.000 \\
ParseI & 1.0000 & 0.9999 & 0.01 & $<0.05$ & 0.998 \\
CLine & 0.4410 & 0.4372 & 0.87 & $<0.05$ & 1.000 \\
Conv & 0.9022 & 0.8852 & 1.92 & $<0.05$ & 1.000 \\
IsDay & 0.0630 & 0.0622 & 1.28 & $0.309$ & 0.598 \\

\bottomrule
\end{tabular}
    }
\end{table}

\newcontent{
Figure~\ref{fig:rq1phase1} (see the rightmost vertical bar) shows that, for source test case selection, across all eight subject programs, the percentages of mutants that cause PSALM and RS to perform better are 47\% and 6\%, respectively.
Whereas for MG selection, Figure~\ref{fig:rq1phase2} (see the rightmost vertical bar) shows that, across all the eight subject programs, the percentages of mutants that cause PSALM and RS to perform better are 40\% and 11\%, respectively.
Here, we can see that, at the mutant level, PSALM largely outperforms RS in both source test case selection and MG selection.

Although the results shown in Tables~\ref{tab:rq1phase1} and~\ref{tab:rq1phase2}, as well as Figure~\ref{fig:rq1}, are largely consistent with our theoretical analysis in Section~\ref{sec:pssVSrs}, which states that PSALM is never inferior to RS in fault detection, a few exceptions are observed. 
For instance, Table~\ref{tab:rq1phase1} reports small negative improvements for ParseI and Conv, and Figure~\ref{fig:rq1} shows that, at the mutant level, RS performs better than PSALM for 6\% of mutants in source test case selection and 11\% of mutants in MG selection. 
These discrepancies appear inconsistent with the theoretical guarantee but can be explained by the following reason.

PSALM employs the BMA-MT algorithm to approximate the ideal proportional allocation of test cases or MGs to subdomains (see Section~\ref{sec:applyPSS4stcs}). 
In the theoretical analysis, these allocations are treated as real-valued, whereas in practice they must be integers. 
As a result, BMA-MT produces an integer approximation to the ideal ratios. 
When the subdomain sizes differ substantially, this approximation may deviate from the ideal proportional allocation and, consequently, lead to suboptimal empirical performance of PSALM.
In other words, these cases reflect a gap between the theoretical model and the practical approximation rather than a violation of the theoretical result.
}

\AnswertoRQ{\RQIndex{1}}{
Overall, PSALM achieves fault detection effectiveness that is equal to or higher than RS for source test case selection and MG selection.
}

\begin{figure*}[htbp]
  \centering
  \begin{minipage}{0.7\linewidth}
    \centering
    \includegraphics[width=\linewidth]{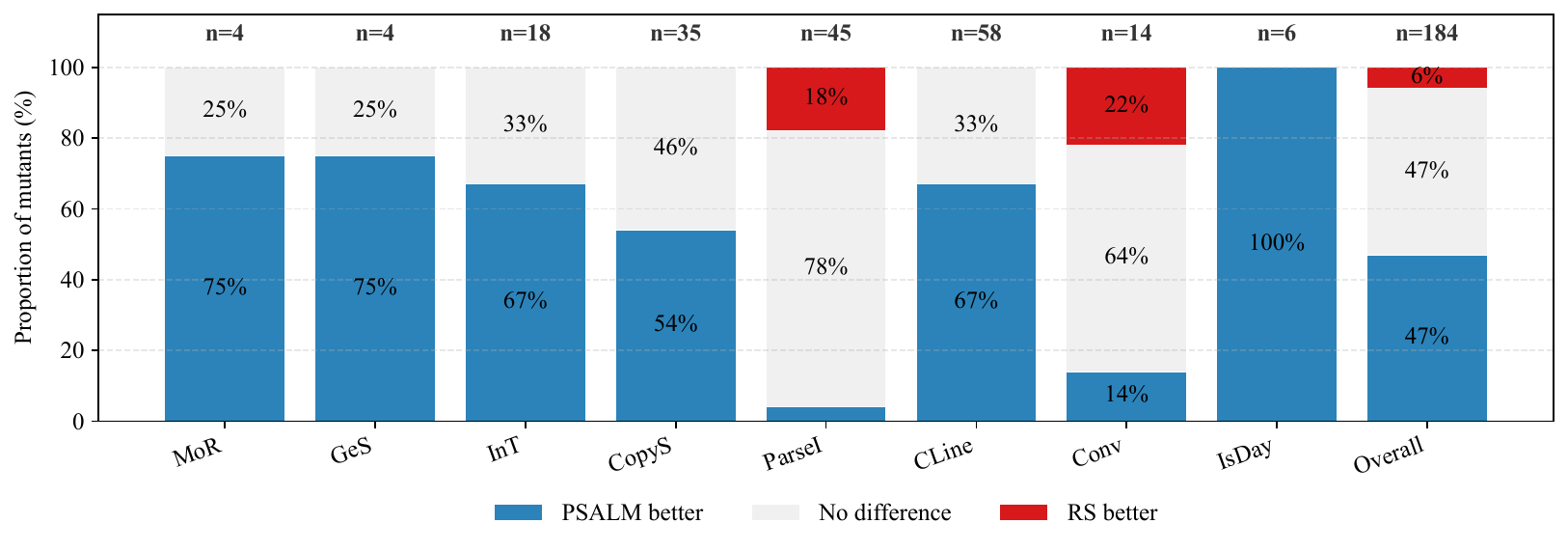}
    \subcaption{Source test case selection.}
    \label{fig:rq1phase1}
  \end{minipage}

  \vspace{6pt} 

  \begin{minipage}{0.7\linewidth}
    \centering
    \includegraphics[width=\linewidth]{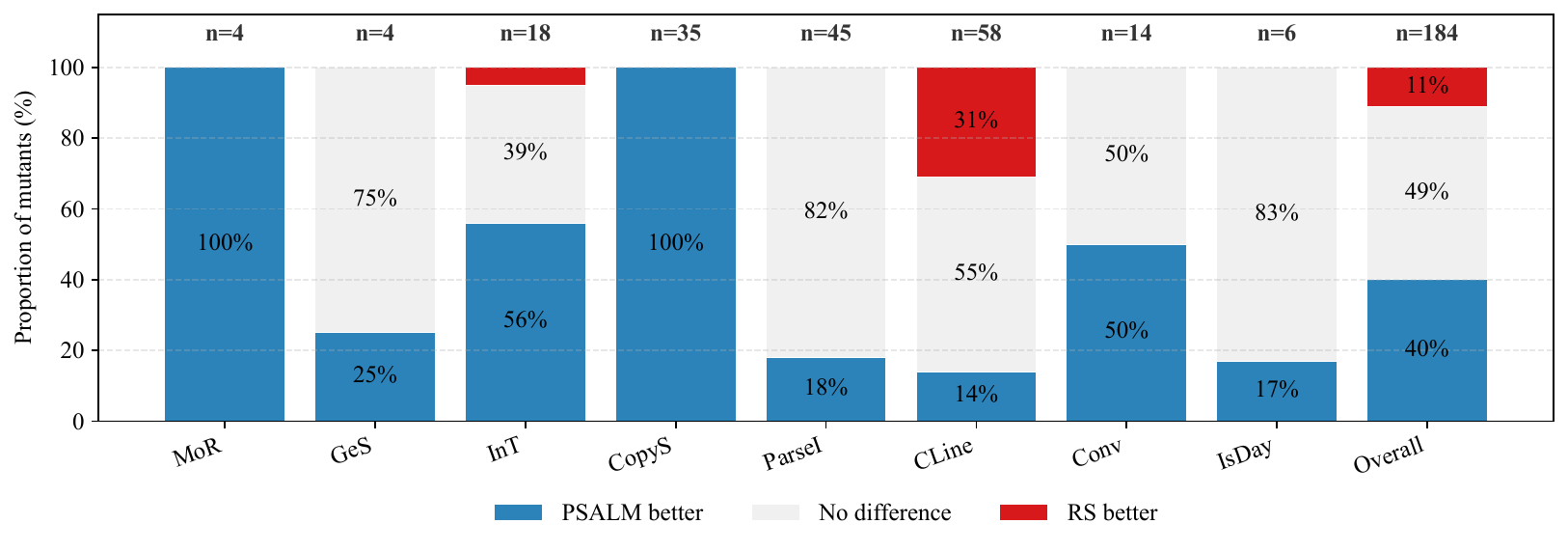}
    \subcaption{MG selection.}
    \label{fig:rq1phase2}
  \end{minipage}

  \caption{Effectiveness comparison between PSALM and RS for source test case selection and MG selection.
  Each vertical bar shows the percentages of mutants leading to better performance, equal performance, and worse performance of PSALM over RS\@.
  The number above each vertical bar shows the total number of mutants generated for the respective subject program.}
  \label{fig:rq1}
\end{figure*}

\subsection{RQ2: Performance comparison of PSALM between source test case selection and MG selection}

\newcontent{
Having established that PSALM empirically outperforms RS in most cases in the preceding subsection, we next examine how its effectiveness differs between source test case selection and MG selection in practical settings. 
Our theoretical analysis identifies the specific conditions under which applying PSALM to both selections would achieve identical fault detection effectiveness. 
However, these conditions may not always be met in practical settings. When this happens, the relative effectiveness of the two selections cannot be determined from theory. 
We therefore conducted an empirical study to examine, under real settings, which selection yields higher fault detection effectiveness.
}

\begin{table}[!ht]
    \ncolor
    \centering
    \caption{Comparison between applying PSALM to source test case selection and to MG selection}
    \label{tab:rq2}
    \resizebox{\columnwidth}{!}{
        \begin{tabular}{lccccc}
\toprule
Program & $\overline{P}_{\mathrm{MG}}$ & $\overline{P}_{\mathrm{st}}$ & Improvement (\%) & $p$ & A12 \\
\midrule
MoR & 0.9952 & 0.9124 & 9.07 & $<0.05$ & 1.000 \\
GeS & 0.5000 & 0.5038 & $-$0.75 & $<0.05$ & 0.000 \\
InT & 0.4444 & 0.2577 & 72.41 & $<0.05$ & 1.000 \\
\midrule
CopyS & 0.9769 & 0.8851 & 10.38 & $<0.05$ & 1.000 \\
ParseI & 1.0000 & 0.9994 & 0.06 & $<0.05$ & 1.000 \\
CLine & 0.4410 & 0.7317 & $-$39.73 & $<0.05$ & 0.000 \\
Conv & 0.9022 & 0.8754 & 3.06 & $<0.05$ & 1.000 \\
IsDay & 0.0630 & 0.2793 & $-$77.45 & $<0.05$ & 0.000 \\

\bottomrule
\end{tabular}
    }
\end{table}

\newcontent{
Table~\ref{tab:rq2} summarizes the comparison results between applying PSALM to source test case selection and MG selection.
According to A12, for five (MoR, InT, CopyS, ParseI, and Conv) of the eight subject programs, applying PSALM to MG selection achieves a better performance in fault detection than for source test case selection (A12 $= 1.000$; $p < 0.05$).
For the remaining three subject programs (GeS, CLine, and IsDay), applying PSALM to source test case selection yields a better performance in fault detection (A12 $= 0.000$; $p < 0.05$).
Note that, in the above cases, the A12 statistics are either 1 or 0, indicating that our experiment results across all the 30 trials are largely consistent.
Regarding the percentage improvements reported in Table~\ref{tab:rq2}, they vary in magnitude across the subject programs, 
ranging from $-77.45\%$ to $72.41\%$.
}

\newcontent{
We further verified whether or not the above observation also applies at the mutant level.
Among all the 184 mutants, 41.9\% of them cause applying PSALM to MG selection to achieve higher fault detection, whereas 24.5\% of them cause applying PSALM to source test case selection to perform better. 
The remaining 33.6\% show no statistically significant difference between applying PSALM to MG selection and source test case selection. 
These results are generally consistent with those observed in Table~\ref{tab:rq2}, i.e., applying PSALM to MG selection yields higher fault detection effectiveness than applying PSALM to source test case selection.
}


\AnswertoRQ{\RQIndex{2}}{
In real settings, PSALM generally achieves higher fault detection effectiveness when applied to MG selection than to source test case selection.
}

\subsection{RQ3: Performance comparison among PSALM, ART, and MT-ART}
\newcontent{Besides RS, we further compare PSALM with two other baseline selection strategies commonly used in MT: ART and MT-ART (the rationales for selecting ART and MT-ART for benchmarking have been given in Section~\ref{sec:baselines}).
Tables~\ref{tab:rq3phase1} and~\ref{tab:rq3phase2} present the comparative results with respect to source test case selection and MG selection, respectively.
}

\begin{table}[!ht]
    \ncolor
    \centering
    \caption{Comparison between PSALM and ART in source test case selection}
    \label{tab:rq3phase1}
    \resizebox{\columnwidth}{!}{
        \begin{tabular}{lccccc}
\toprule
Program & $\overline{P}_{\mathrm{PSALM}}$ & $\overline{P}_{\mathrm{ART}}$ & Improvement (\%) & $p$-value & A12 \\
\midrule
MoR & 0.9124 & 0.9113 & 0.12 & $0.537$ & 0.572 \\
GeS & 0.5038 & 0.5028 & 0.20 & $<0.05$ & 0.769 \\
InT & 0.2577 & 0.0802 & 221.18 & $<0.05$ & 1.000 \\
\midrule
CopyS & 0.8851 & 0.8838 & 0.14 & $<0.05$ & 0.712 \\
ParseI & 0.9994 & 0.9994 & 0.00 & $0.646$ & 0.474 \\
CLine & 0.7317 & 0.4249 & 72.18 & $<0.05$ & 1.000 \\
Conv & 0.8754 & 0.4381 & 99.84 & $<0.05$ & 1.000 \\
IsDay & 0.2793 & 0.0717 & 289.39 & $<0.05$ & 1.000 \\
\bottomrule
\end{tabular}
    }
\end{table}

\begin{table}[!ht]
    \ncolor
    \centering
    \caption{Comparison between PSALM and MT-ART in MG selection}
    \label{tab:rq3phase2}
    \resizebox{\columnwidth}{!}{
        \begin{tabular}{lccccc}
\toprule
Program & $\overline{P}_{\mathrm{PSALM}}$ & $\overline{P}_{\mathrm{MT-ART}}$ & Improvement (\%) & $p$-value & A12 \\
\midrule
MoR & 0.9952 & 0.9145 & 8.82 & $<0.05$ & 1.000 \\
GeS & 0.5000 & 0.5000 & 0.00 & $-$ & 0.500 \\
InT & 0.4444 & 0.0884 & 402.44 & $<0.05$ & 1.000 \\
\midrule
CopyS & 0.9769 & 0.8989 & 8.67 & $<0.05$ & 1.000 \\
ParseI & 1.0000 & 1.0000 & 0.00 & $-$ & 0.500 \\
CLine & 0.4410 & 0.4525 & $-$2.54 & $<0.05$ & 0.000 \\
Conv & 0.9022 & 0.8632 & 4.53 & $<0.05$ & 1.000 \\
IsDay & 0.0630 & 0.0496 & 27.05 & $<0.05$ & 0.998 \\
\bottomrule
\end{tabular}
    }
\end{table}
\newcontent{
Table~\ref{tab:rq3phase1} shows that, when selecting source test cases, PSALM attains equal or higher $P$-measures than ART across all the eight subject programs.
Also, in six subject programs (GeS, InT, CopyS, CLine, Conv, and IsDay): (a)~the differences in $P$-measures are statistically significant ($p<0.05$), 
(b)~the effect sizes are large (A12 $\geqslant 0.712$), and 
(c)~the percentage improvements are particularly prominent for InT and IsDay, reaching 221.18\% and 289.39\%, respectively.

Next, we turn to Table~\ref{tab:rq3phase2}\,%
\footnote{\,In this table, the $p$-values for GeS and ParseI are undefined, because the $P$-measures of all the 30 trials are identical.}.
This table shows that, with respect to MG selection, PSALM attains equal or higher $P$-measures than MT-ART in
seven (MoR, GeS, InT, CopyS, ParseI, Conv, and IsDay) out of the eight subject programs.
Furthermore, among these seven programs, five of them have very large A12 statistics ($\geqslant 0.998$). With respect to percentage improvement, InT exhibits the largest gain (402.44\%).

For GeS and ParseI, both have identical $P$-measures when using PSALM or MT-ART for MG selection (see Table~\ref{tab:rq3phase2}). 
A close examination found that, for every generated mutant associated with GeS and ParseI, it was almost always detectable or almost non-detectable, causing the $P$-measures of these mutants to be extremely close to 1 or 0. 
Since the values reported in Tables~\ref{tab:rq3phase2} are averages over all mutants, such highly polarized $P$-measures (i.e., very close to 1 or 0) also cause the overall averages to remain largely unchanged across different strategies.
In such cases, the $P$-measure becomes insensitive to the selection strategy, leading PSALM and MT-ART to exhibit nearly identical $P$-measure values. This also explains why almost all mutants in GeS and ParseI fall into the “No Difference" category in Figure~\ref{fig:rq3phase2}.

}

\begin{figure*}[!htbp]
  \centering
  \begin{minipage}{0.7\linewidth}
    \centering
    \includegraphics[width=\linewidth]{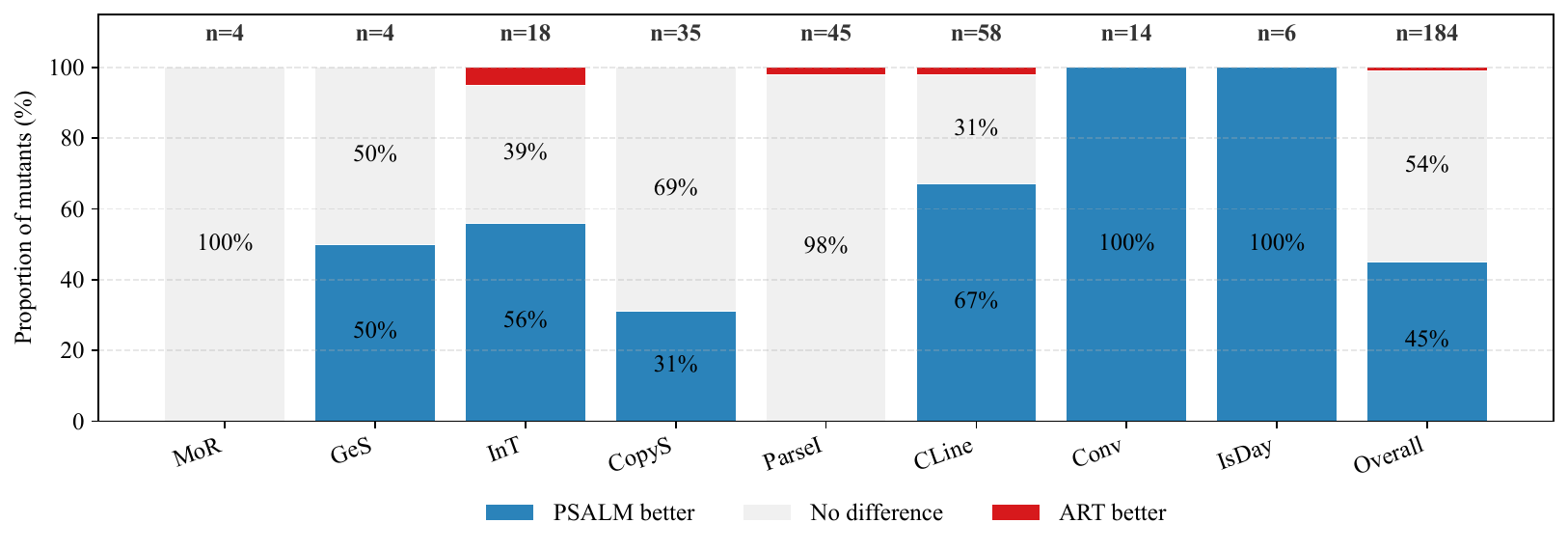}
    \subcaption{Selecting source test cases.}
    \label{fig:rq3phase1}
  \end{minipage}

  \vspace{6pt} 

  \begin{minipage}{0.7\linewidth}
    \centering
    \includegraphics[width=\linewidth]{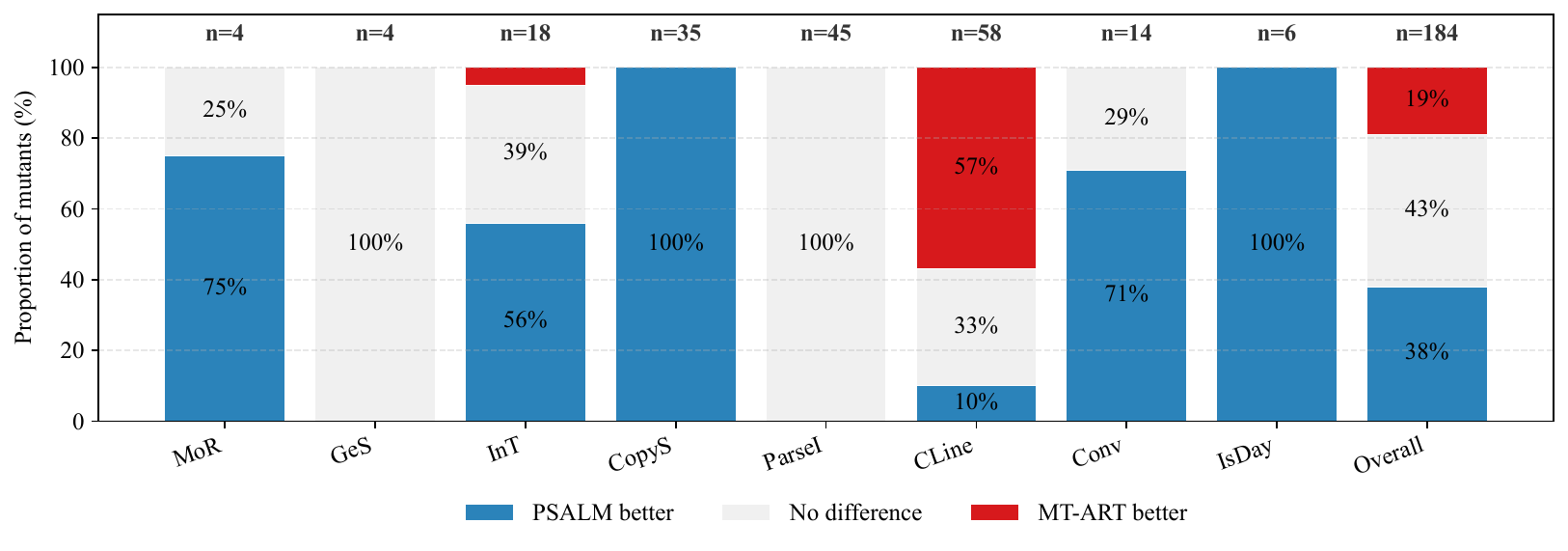}
    \subcaption{Selecting MGs.}
    \label{fig:rq3phase2}
  \end{minipage}

  \caption{Effectiveness comparison among PSALM, ART (for source test case selection), and MT-ART (for MG selection).}
  \label{fig:rq3}
\end{figure*}

\newcontent{
Figure~\ref{fig:rq3} depicts the relative performance among PSALM, ART, and MT-ART at the mutant-level.
Each subfigure shows the proportion of mutants for which PSALM performs better, worse, or the same as ART/MT-ART across the subject programs.
For source test case selection (see Figure~\ref{fig:rq3phase1}), PSALM performs better than ART on 45\% of mutants overall, whereas ART performs better only on 1\% of mutants.
For the remaining 54\% of mutants, PSALM and ART show identical fault detection effectiveness. 
For MG selection (see Figure~\ref{fig:rq3phase2}), PSALM outperforms MT-ART on 38\% of mutants overall, while MT-ART performs better only on 19\% of mutants.
The remaining 43\% of mutants show no difference between PSALM and MT-ART\@.

These mutant-level distributions are consistent with the program-level observations in Tables~\ref{tab:rq3phase1} and~\ref{tab:rq3phase2}. 
In particular, programs such as GeS and ParseI exhibit a large proportion of ``No Difference" mutants, which aligns with our earlier analysis that their $P$-measures remain unchanged across strategies.
}

\AnswertoRQ{\RQIndex{3}}{In terms of fault detection effectiveness, PSALM generally outperforms ART and MT-ART in source test case selection and MG selection, respectively.}

\section{Threats to Validity}
\label{sec:threatsValidity}

\textit{Subject program selection:} 
Our study involved \newcontent{eight subject programs in total. 
These include three classical benchmark programs from~\cite{ChanChen1996} and five additional real-world methods extracted from Defects4J projects.} 
The three benchmark programs cover distinct partition scenarios: MoR features subdomain sizes that do not conform to simple integer ratios; InT has subdomains whose sizes vary substantially; and GeS contains an extremely large number of subdomains. 
\newcontent{The five additional real-world programs provide evidence of the effectiveness of our method on realistic software.}
Hence, we argue that the \newcontent{eight} subject programs have provided useful insights into the use of {PSALM} for selecting source test cases and MGs.

\textit{Mutant generation and selection:}
Due to the absence of available mutants for the subject programs, we used the \textsc{Mutmut} tool \newcontent{together with the \textsc{Major} framework} to generate mutants for our study.
To ensure the feasibility and reliability of the experiment, we examined all generated mutants and removed those that were ``non-executable'' \newcontent{as well as mutants that were clearly equivalent, trivial, or subsumed, in order to avoid redundancy and retain only representative fault-revealing mutants}.
These removals may reduce the overall breadth of the mutant set.
Nevertheless, the observed results were highly consistent across different subject programs \newcontent{based on the 184 selected mutants}, providing strong support for the validity of our findings.

\textit{MR identification:}
Due to the absence of publicly available MRs for the three subject programs, the MRs used in our study \newcontent{were manually defined by us based on their documented functional properties}. 
Obviously, the more MRs we used, the more comprehensive the study would be. 
Nevertheless, using a limited MR set does not invalidate our comparison between \newcontent{PSALM and the baseline selection strategies}, because the empirical observations were consistent with the underlying mathematical guarantees. 
\newcontent{All manually defined MRs were discussed by multiple authors to ensure their correctness and relevance.}
\newcontent{In addition, all implementations used in our study have been made publicly available to facilitate independent inspection and replication.}


\section{Related Work}
\label{sec:relatedWork}

\noindent\textbf{Partition testing and PSS\@.}
\newcontent{Partition testing has long been studied as a systematic alternative to random testing.}
Many existing studies on partition testing involve conducting empirical studies to investigate the fault detection effectiveness of partition testing over random testing~\cite{Vagoun1996, Sinaga2018, ChenYu1996}.  
\newcontent{To provide a theoretical foundation, Chen and Yu analyzed partition testing using a formal probabilistic model and introduced PSS~\cite{ChenYu1994, ChenYu1996, ChenYu1996a}.}
\newcontent{Subsequent work further established PSS as a necessary and sufficient condition for achieving universal safeness and refined its practical guidelines~\cite{ChanChen1996, ChenTse2001, LeungChen1999, LeungChen2000, ChenYu2000, ChenYu2001}.}
\newcontent{Beyond traditional testing, related studies have examined partition testing in broader contexts and proposed dynamic or adaptive variants, including dynamic partitioning, random testing with dynamically updated test profiles, and ART~\cite{SunFu2021, Vagoun1996, CaiJing2005, CaiHu2009, SunDai2019}.}
\newcontent{However, none of these studies analyzed partition testing in the MT context, where the selection domains, test units, and fault distributions differ fundamentally from traditional testing.}
\newcontent{As a result, it remains unknown whether the effectiveness of PSS over RS, as established in traditional testing, still holds in the context of MT\@.}


\noindent\textbf{Source test case selection and MG selection in MT\@.}
\newcontent{Several studies have explored how test case selection may influence the fault detection effectiveness of MT, although these efforts remain limited in scope.
Barus et al.~\cite{BarusChen2016} applied ART to select source test cases and observed improvements in fault detection effectiveness. 
Saha and Kanewala~\cite{SahaKanewala2018} examined coverage-guided source test case generation strategies and reported similar benefits. 
Sun et al.~\cite{SunLiu2022} further proposed path-directed generation and prioritization of source test cases to improve the efficiency of MT\@. }
\newcontent{In addition to source test case selection, several approaches have been proposed for MG selection.
MD-ART, and its extension MT-ART, apply ART to MT by selecting MGs using distance measures over the source and follow-up test cases~\cite{HuiWang2021}.}
\newcontent{More recently, SFIDMT-ART introduced distance criteria that explicitly account for the domain difference between the source and follow-up input domains to select MGs~\cite{YingTowey2024}.}
\newcontent{However, these studies focus primarily on diversity or structural coverage and do not analyze the fault-detection effectiveness of MT selection strategies themselves.}
\newcontent{In particular, none of them connects MT test selection with the theoretical results of partition testing or examines whether the effectiveness properties guaranteed by PSS still hold in MT\@.}
\section{Conclusion and Future Work}
\label{sec:conclusion}

In this paper, we have introduced our PSALM method, involving the use of proportional sampling strategy (PSS) for selecting source test cases and metamorphic groups (MGs), with a view to improving the fault detection effectiveness of metamorphic testing (MT). Associated with the PSALM method, we have performed theoretical analysis to:
(a)~compare randomly selecting source test cases/MGs with \newcontent{using our PSALM method for such selection}, and 
(b)~compare the use of \newcontent{our PSALM method} between source test case selection and MG selection.
More specifically, we have formally proved that:
(i)~regardless of the partition scheme, we have $P_p^{st} \geqslant P_r^{st}$;
(ii)~regardless of the partition scheme, we have $P_p^\mathrm{MG} \geqslant P_r^\mathrm{MG}$;
(iii)~when the prerequisite conditions are met, we have $P_p^{st} = P_p^\mathrm{MG}$; and
(iv)~when the prerequisite conditions are \newcontent{not} met, $P_p^{st}$ can be greater than, equal to, or less than $P_p^\mathrm{MG}$.

\newcontent{
To evaluate the effectiveness of our PSALM method in realistic MT settings, we have conducted an experiment involving eight subject programs, 184 mutants, and a set of manually defined metamorphic relations (MRs). 
Overall, the experimental results are consistent with our theoretical analysis. 
The results further indicate that PSALM generally achieves higher fault detection effectiveness than ART and MT-ART, and that applying PSALM to source test case selection incurs lower computational overhead than applying it to MG selection in practice.
}

In view of our encouraging result, we plan to apply our PSALM method to different application domains (e.g., embedded systems, compilers, machine learning, web services, online search engines, autonomous car driving, and large language models) to further investigate the effectiveness of PSALM\@. 
\newcontent{
Another direction for future work is to investigate how source test case selection and MG selection can be jointly applied in a unified selection process. Different ways of integrating these two levels of selection may lead to different levels of effectiveness.
}



\section*{Acknowledgment}

This work was supported in part by the National Natural Science Foundation of China under Grant No.~62372021\@.
The authors would like to thank Professor T.\,Y.~Chen at Swinburne University of Technology, Australia, for his valuable review and support in this research project.
\bibliographystyle{IEEEtran}
\bibliography{ref}
\end{document}